\newtheorem{theorem}{Theorem}[section]
\newtheorem{lemma}[theorem]{Lemma}
\newtheorem{corollary}[theorem]{Corollary}
\newtheorem{remark}{Remark}
\newcommand{\al}{\alpha}
\newcommand{\bt}{\beta}
\newcommand{\la}{\lambda}
\newcommand{\ka}{\kappa}
\newcommand{\s}{\sigma}
\newcommand{\be}{\begin{equation}}
\newcommand{\ee}{\end{equation}}
\newcommand{\no}{\nonumber}
\numberwithin{equation}{section}
\begin{document}

\title{Generalized Airy polynomials, Hankel determinants and asymptotics}
\author{Chao Min\thanks{School of Mathematical Sciences, Huaqiao University, Quanzhou 362021, China; Email: chaomin@hqu.edu.cn}\: and Pixin Fang\thanks{School of Mathematical Sciences, Huaqiao University, Quanzhou 362021, China}}


\date{\today}
\maketitle
\begin{abstract}
We further study the orthogonal polynomials with respect to the generalized Airy weight based on the work of Clarkson and Jordaan [{\em J. Phys. A: Math. Theor.} {\bf 54} ({2021}) {185202}]. We prove the ladder operator equations and associated compatibility conditions for orthogonal polynomials with respect to a general Laguerre-type weight of the form $w(x)=x^\lambda w_0(x),\;\la>-1, x\in\mathbb{R}^+$. By applying them to the generalized Airy polynomials, we are able to derive a discrete system for the recurrence coefficients. Combining with the Toda evolution, we establish the relation between the recurrence coefficients, the sub-leading coefficient of the monic generalized Airy polynomials and the associated Hankel determinant. Using Dyson's Coulomb fluid approach and with the aid of the discrete system for the recurrence coefficients, we obtain the large $n$ asymptotic expansions for the recurrence coefficients and the sub-leading coefficient of the monic generalized Airy polynomials. The large $n$ asymptotic expansion (including the constant term) of the Hankel determinant has been derived by using a recent result in the literature. The long-time asymptotics of these quantities have also been discussed explicitly.
\end{abstract}

$\mathbf{Keywords}$: Generalized Airy polynomials; Ladder operators; Recurrence coefficients;

Hankel determinants; Coulomb fluid; Asymptotic expansions.

$\mathbf{Mathematics\:\: Subject\:\: Classification\:\: 2020}$: 42C05, 33C45, 41A60.

\section{Introduction}
It is a well-known fact that the weight functions $w(x)$ of classical orthogonal polynomials (such as Hermite, Laguerre and Jacobi polynomials) satisfy the Pearson equation
\be\label{pe}
\frac{d}{dx}(\sigma(x)w(x))=\tau(x)w(x),
\ee
where $\sigma(x)$ and $\tau(x)$ are polynomials with $\mathrm{deg}(\sigma(x))\le2$ and $\mathrm{deg}(\tau(x))=1$.
Semi-classical orthogonal polynomials have the weights $w(x)$ that satisfy the equation (\ref{pe}) where $\sigma(x)$ and $\tau(x)$ are polynomials with either $\mathrm{deg}(\sigma(x))>2$ or $\mathrm{deg}(\tau(x))\neq 1$. See, e.g., \cite[Section 1.1.1]{VanAssche}.

Recently, Clarkson and Jordaan {\cite{Clarkson4}} considered the semi-classical orthogonal polynomials with respect to the so-called generalized Airy weight
\be\label{airy}
w(x)=w(x;t)=x^{\lambda}\mathrm{e}^{-\frac{1}{3}x^3+tx},\qquad x\in\mathbb{R}^+
\ee
with parameters $\lambda>-1,\; t\in\mathbb{R}$. The moments of the weight (\ref{airy}) can be expressed in terms of the Airy functions for some particular $\la$. They mainly derived the differential and difference equations for the generalized Airy polynomials and for the recurrence coefficients. They also studied properties of the zeros of the polynomials and various asymptotics of the recurrence coefficients.

The weight function (\ref{airy}) is indeed a semi-classical weight, since it satisfies the Pearson equation (\ref{pe}) with
$$
\sigma(x)=x,\qquad \tau(x)=-x^3+tx+\lambda+1.
$$
Orthogonal polynomials associated with the exponential cubic weight and its variants have also been studied in \cite{Magnus,MS,VFZ} and have important applications in
numerical analysis \cite{DHK} and random matrix theory known as the cubic matrix model \cite{BD,BD1,BDY}.

Let $P_n(x),\; n=0, 1, 2, \ldots$ be the \textit{monic} polynomials of degree $n$, orthogonal with respect to the weight function (\ref{airy}), namely,
\be\label{orthre}
\int_0^{\infty}P_m(x)P_n(x)w(x)dx=h_n\delta_{mn},\qquad m,n=0,1,2,\dots,
\ee
where $\delta_{mn}$ is the Kronecker delta, $h_n>0$ is the normalized constant and $P_n(x)$ has the monomial expansion
\be\label{pnxt1}
P_n(x)=x^n+{\rm\bf p}(n)x^{n-1}+\cdots+P_n(0),
\ee
where ${\rm\bf p}(n)$, the sub-leading coefficient of $P_n(x)$, plays an important role in the following discussions.

One of the most important properties of orthogonal polynomials is that
they satisfy the three-term recurrence relation of the form
\be\label{pnxt}
xP_n(x)=P_{n+1}(x)+\alpha_nP_n(x)+\beta_nP_{n-1}(x),
\ee
with the initial conditions
$$
P_0(x)=1,\qquad \beta_0P_{-1}(x)=0.
$$

Due to the dependence on the parameter $t$ in the weight function (\ref{airy}), all the quantities like $P_n(x),\; h_n,\; {\rm\bf p}(n)$ and the recurrence coefficients $\al_n,\; \bt_n$ are actually functions of $t$.

It is easy to see that the recurrence coefficients $\al_n$ and $\bt_n$ have the following expressions:
\be\label{alpha}
\alpha_n={\rm\bf p}(n)-{\rm\bf p}(n+1),
\ee
\be\label{beta}
\beta_n=\frac{h_n}{h_{n-1}}.
\ee
A telescopic sum of (\ref{alpha}) produces
\be\label{suma}
\sum_{j=0}^{n-1}\alpha_j=-{\rm\bf p}(n),
\ee
where we have used the initial condition ${\rm\bf p}(0)=0$.
Furthermore, from (\ref{pnxt}) we have the Christoffel-Darboux formula
\be\label{cd}
\sum_{j=0}^{n-1}\frac{P_j(x)P_j(y)}{h_j}=\frac{P_n(x)P_{n-1}(y)-P_n(y)P_{n-1}(x)}{h_{n-1}(x-y)}.
\ee
For more information about orthogonal polynomials, see \cite{Chihara,Ismail,Szego}.

The Hankel determinant generated by the weight (\ref{airy}) is
$$
D_{n}(t):=\det\big(\mu_{i+j}(t)\big)_{i,j=0}^{n-1}=\begin{vmatrix}
\mu_{0}(t)&\mu_{1}(t)&\cdots&\mu_{n-1}(t)\\
\mu_{1}(t)&\mu_{2}(t)&\cdots&\mu_{n}(t)\\
\vdots&\vdots&&\vdots\\
\mu_{n-1}(t)&\mu_{n}(t)&\cdots&\mu_{2n-2}(t)
\end{vmatrix},\\
$$
where $\mu_j(t)$ is the $j$th moment given by
$$
\mu_j(t):=\int_{0}^{\infty}x^jw(x;t)dx,\qquad j=0,1,2,\dots
$$
and it can be expressed in terms of the generalized hypergeometric functions \cite{Clarkson4}.
The Hankel determinant $D_n(t)$ is equal to the product of $h_j(t)$ in the form {\cite[(2.1.6)]{Ismail}}
\be\label{dnt1}
D_n(t)=\prod_{j=0}^{n-1}h_j(t).
\ee
From (\ref{beta}) and (\ref{dnt1}), we have the relation between $\beta_n(t)$ and $D_n(t)$:
\be\label{betadn}
\beta_n(t)=\frac{D_{n+1}(t)D_{n-1}(t)}{D_n^2(t)}.
\ee


The remainder of this paper is organized as follows. In Section 2, we prove the ladder operator equations and associated compatibility conditions for orthogonal polynomials with respect to a general Laguerre-type weight. In Section 3, we apply the ladder operators and compatibility conditions to the generalized Airy polynomials. This enables us to derive a discrete system for the recurrence coefficients, and establish the relations between the logarithmic derivative of the Hankel determinant, the sub-leading coefficient ${\rm\bf p}(n,t)$ and the recurrence coefficients. In Section 4, we study the large $n$ asymptotics of the recurrence coefficients, the sub-leading coefficient ${\rm\bf p}(n,t)$, the Hankel determinant $D_n(t)$ and the normalized constant $h_n(t)$ for fixed $t\in\mathbb{R}$ by using Dyson's Coulomb fluid approach. The long-time asymptotics ($t\rightarrow\pm\infty$) of these quantities for fixed $n\in\mathbb{N}$ have been investigated in Section 5. Finally, we present some concluding remarks in Section 6.

\section{Ladder operators and compatibility conditions}
The ladder operators and compatibility conditions for orthogonal polynomials have been derived by Chen and Ismail \cite{ChenIsmail2,ChenIsmail2005}. However, they require that the weight be well defined at the endpoints of the interval (vanish at the endpoints most of the time in practice). This is not the case for the generalized Airy weight since it is undefined at $0$ when $\la<0$.

To solve this problem,
in this section we derive the ladder operator equations and compatibility conditions for the monic orthogonal polynomials with respect to a general Laguerre-type weight of the form
\be\label{lw}
w(x)=x^\la w_0(x),\qquad x\in\mathbb{R}^+,
\ee
where $\lambda>-1,\; w_0(x)$ is a continuously differentiable function defined on $[0,\infty)$ and all the moments
$
\int_{0}^{\infty}x^jw(x)dx,\; j=0,1,2,\dots
$
exist. We also require that the weight $w(x)$ be rapidly decreasing in the sense that $\lim\limits_{x\rightarrow+\infty}\pi(x)w(x)=0$ for an arbitrary polynomial $\pi(x)$. Typical examples for such weights are\\
1. the classical Laguerre weight $w(x)=x^{\lambda}\mathrm{e}^{-x},\; x\in\mathbb{R}^+,\; \la>-1$,\\
2. the semi-classical Laguerre weight $w(x)=x^{\lambda}\mathrm{e}^{-x^2+tx},\; x\in\mathbb{R}^+,\; \la>-1,\; t\in\mathbb{R}$ \cite{Boelen,Clarkson,Filipuk,Min2023},\\
3. the generalized Airy weight (\ref{airy}).\\
Obviously, the formulas (\ref{orthre})--(\ref{cd}) still hold for the general Laguerre-type weight (\ref{lw}).

\begin{theorem}\label{low 1}
The monic orthogonal polynomials with respect to the general Laguerre-type weight (\ref{lw}) satisfy the lowering operator equation
\be\label{lower}
\left(\frac{d}{dx}+B_n(x)\right)P_n(x)=\beta_nA_n(x)P_{n-1}(x),
\ee
where $A_n(x)$ and $B_n(x)$ are given by
\begin{subequations}\label{anbn1}
\be
A_n(x):=\frac{1}{x}\cdot\frac{1}{h_n}\int_0^{\infty}\frac{x\mathrm{v}'(x)-y\mathrm{v}'(y)}{x-y}P_n^2(y)w(y)dy,\label{eqn:aa}
\ee
\be
B_n(x):=\frac{1}{x}\left(\frac{1}{h_{n-1}}\int_0^{\infty}\frac{x\mathrm{v}'(x)-y\mathrm{v}'(y)}{x-y}P_n(y)P_{n-1}(y)w(y)dy-n\right),\label{eqn:ab}
\ee
\end{subequations}
and $\mathrm{v}(x)=-\ln w(x)$.
\end{theorem}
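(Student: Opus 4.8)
The plan is to follow the Chen--Ismail scheme \cite{ChenIsmail2,ChenIsmail2005}, but to overcome the difficulty already flagged above: the usual derivation integrates $\int_0^\infty P_n'(x)P_k(x)w(x)\,dx$ by parts, producing the boundary term $[P_nP_kw]_0^\infty$, and for the weight (\ref{lw}) the factor $x^\la$ makes this term ill-defined at the origin when $\la<0$. The device that repairs this, and which simultaneously explains the shape of (\ref{eqn:aa})--(\ref{eqn:ab}) (the kernel $\frac{x\mathrm{v}'(x)-y\mathrm{v}'(y)}{x-y}$, the prefactor $1/x$, and the constant $-n$), is to carry one extra factor of $x$ through the integration by parts.

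Concretely, I would first expand the degree-$n$ polynomial $xP_n'(x)$, whose leading coefficient is $n$, in the orthogonal basis:
\be\label{plan-exp}
xP_n'(x)=nP_n(x)+\sum_{k=0}^{n-1}a_{n,k}P_k(x),\qquad a_{n,k}h_k=\int_0^\infty xP_n'(x)P_k(x)w(x)\,dx.
\ee
The key computation is to integrate $a_{n,k}h_k=\int_0^\infty P_n'(x)\bigl[xP_k(x)w(x)\bigr]\,dx$ by parts. The resulting boundary term is $\bigl[P_n(x)\,xP_k(x)w(x)\bigr]_0^\infty$, and now the factor $x\cdot x^\la=x^{\la+1}$ vanishes at $x=0$ because $\la+1>0$, while the upper endpoint is killed by the rapid-decrease hypothesis; \emph{this is the step on which everything hinges.} Using $w'=-\mathrm{v}'w$ and discarding $\int_0^\infty P_nP_kw\,dx$ and $\int_0^\infty P_n\,(xP_k')\,w\,dx$, which both vanish by orthogonality since $\deg(xP_k')\le n-1$, I obtain
\be
a_{n,k}h_k=\int_0^\infty y\mathrm{v}'(y)P_n(y)P_k(y)w(y)\,dy.\no
\ee

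To finish, I would substitute this into the sum in (\ref{plan-exp}) and apply the Christoffel--Darboux formula (\ref{cd}). Writing $y\mathrm{v}'(y)=x\mathrm{v}'(x)-\bigl(x\mathrm{v}'(x)-y\mathrm{v}'(y)\bigr)$, the piece carrying $x\mathrm{v}'(x)$ integrates to zero by orthogonality (the kernel $\sum_{k<n}P_k(x)P_k(y)/h_k$ is orthogonal to $P_n$), while the remaining piece produces exactly the kernel $\frac{x\mathrm{v}'(x)-y\mathrm{v}'(y)}{x-y}$. Identifying the two resulting integrals with $h_n\,xA_n(x)$ and $h_{n-1}\bigl(xB_n(x)+n\bigr)$ from (\ref{eqn:aa})--(\ref{eqn:ab}), and using $\beta_n=h_n/h_{n-1}$ from (\ref{beta}), collapses (\ref{plan-exp}) to $xP_n'(x)=-xB_n(x)P_n(x)+\beta_n xA_n(x)P_{n-1}(x)$; the two copies of $nP_n(x)$ cancel, which is precisely the role of the constant $-n$ inside $B_n$. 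Dividing by $x$ then yields (\ref{lower}).

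The main obstacle I anticipate is not algebraic but a matter of convergence and bookkeeping. The naive kernel $\frac{\mathrm{v}'(x)-\mathrm{v}'(y)}{x-y}$ of the classical theory is \emph{not} integrable against $w$ near $0$ when $\la\le0$, so one must never write such an integral in isolation. The reason the final formulas are legitimate is that $y\mathrm{v}'(y)=-\la+y\mathrm{v}_0'(y)$, with $\mathrm{v}_0=-\ln w_0$, is bounded (indeed continuous) at the origin, so all integrals built from $y\mathrm{v}'(y)$ and from $\frac{x\mathrm{v}'(x)-y\mathrm{v}'(y)}{x-y}$ converge. The careful point throughout is thus to verify that the $x$-weighting keeps the boundary term and every integral finite simultaneously, rather than regularizing a cancellation between two separately divergent quantities.
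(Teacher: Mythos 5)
Your proposal is correct and follows essentially the same route as the paper: both expand $xP_n'(x)$ (rather than $P_n'(x)$) in the orthogonal basis so that the boundary term in the integration by parts carries the factor $x^{\lambda+1}$ and vanishes at the origin, then pass to the kernel $\frac{x\mathrm{v}'(x)-y\mathrm{v}'(y)}{x-y}$ via orthogonality and the Christoffel--Darboux formula. Your closing remarks on why only the $x$-weighted integrals are individually convergent for $\lambda\le 0$ make explicit a point the paper leaves implicit, but the argument is the same.
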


\begin{proof}
Since $P_n(x)$ is a polynomial of degree $n$, we have
\be\label{xpn}
xP_n'(x)=\sum_{k=0}^n c_{nk}P_k(x).
\ee
Using the orthogonality relation (\ref{orthre}), we find
$$
c_{nk}=\frac{1}{h_k}\int_0^{\infty}y P_n'(y)P_k(y)w(y)dy.
$$
When $k=n$, it is obvious to see that
\be\label{cnn}
c_{nn}=n.
\ee
When $k=0, 1, \ldots, n-1$, we have through integration by parts
\begin{align}\label{hkcn}
h_k c_{nk}&=\int_0^{\infty}yP_k(y)w(y)dP_n(y) \no\\
&=\Big[yP_k(y)P_n(y)w(y)\Big]_0^{\infty}-\int_0^{\infty}\left(yP_k(y)w(y)\right)'P_n(y)dy\no\\
&=-\int_0^{\infty}\left(P_n(y)P_k(y)+yP_n(y)P_k'(y)\right)w(y)dy-\int_0^{\infty}yP_n(y)P_k(y)w'(y)dy\no\\
&=-\int_0^{\infty}yP_n(y)P_k(y)w'(y)dy,\no
\end{align}
i.e.,
\be\label{cnk}
c_{nk}=-\frac{1}{h_k}\int_0^{\infty}yP_n(y)P_k(y)w'(y)dy,\qquad k=0, 1, \ldots, n-1.
\ee
Substituting (\ref{cnn}) and (\ref{cnk}) into (\ref{xpn}), we find
\begin{align}
xP_n'(x)&=nP_n(x)-\sum_{k=0}^{n-1}\frac{P_k(x)}{h_k}\int_0^{\infty}yP_n(y)P_k(y)w'(y)dy\no\\
&=nP_n(x)-\sum_{k=0}^{n-1}\frac{P_k(x)}{h_k}\int_0^{\infty}yP_n(y)P_k(y)(-\mathrm{v}'(y))w(y)dy\no\\
&=nP_n(x)-\sum_{k=0}^{n-1}\frac{P_k(x)}{h_k}\int_0^{\infty}P_n(y)P_k(y)(x\mathrm{v}'(x)-y\mathrm{v}'(y))w(y)dy\no\\
&=nP_n(x)-\int_0^{\infty}P_n(y)\left(\sum_{k=0}^{n-1}\frac{P_k(x)P_k(y)}{h_k}\right)(x\mathrm{v}'(x)-y\mathrm{v}'(y))w(y)dy.\no
\end{align}
By making use of the Christoffel-Darboux formula (\ref{cd}),
it follows that
\begin{align}
xP_n'(x)
=&-\left(\frac{1}{h_{n-1}}\int_0^{\infty}\frac{x\mathrm{v}'(x)-y\mathrm{v}'(y)}{x-y}P_n(y)P_{n-1}(y)w(y)dy-n\right)P_n(x)\no\\
&+\left(\frac{1}{h_{n-1}}\int_0^{\infty}\frac{x\mathrm{v}'(x)-y\mathrm{v}'(y)}{x-y}P_n^2(y)w(y)dy\right)P_{n-1}(x),\no
\end{align}
which is equivalent to (\ref{lower}) with $A_n(x)$ and $B_n(x)$ given in (\ref{anbn1}).
\end{proof}
\begin{remark}
When $\la>0$, it is easy to find that the definitions of $A_n(x)$ and $B_n(x)$ in (\ref{anbn1}) can be simplified to
$$
A_n(x)=\frac{1}{h_n}\int_0^{\infty}\frac{\mathrm{v}'(x)-\mathrm{v}'(y)}{x-y}P_n^2(y)w(y)dy,
$$
$$
B_n(x)=\frac{1}{h_{n-1}}\int_0^{\infty}\frac{\mathrm{v}'(x)-\mathrm{v}'(y)}{x-y}P_n(y)P_{n-1}(y)w(y)dy,
$$
which is the case of Chen and Ismail \cite{ChenIsmail2,ChenIsmail2005}. Hence, it should also be pointed out that Theorem 3.10 and Lemma 3.11 in \cite{Clarkson4} correspond to the case $\la>0\; (w(0)=w(\infty)=0)$. See also Remark \ref{re} in the next section.
\end{remark}
\begin{remark}
The formulas in (\ref{anbn1}) are very effective. Consider a simple example for the monic Laguerre polynomials with the weight $w(x)=\mathrm{e}^{-x},\;x\in\mathbb{R}^+$. It is easy to see from (\ref{anbn1}) that $A_n(x)=1/x,\;B_n(x)=-n/x$, which is more straightforward than \cite[(1.5) and (1.6)]{ChenIsmail2005}.
\end{remark}
By making use of the definitions of $A_n(x)$ and $B_n(x)$ in (\ref{anbn1}), we will prove that the compatibility conditions (\ref{s1}), (\ref{s2}) and (\ref{s2'}) still hold in the following theorems.
\begin{theorem}
The functions $A_n(x)$ and $B_n(x)$ satisfy the first supplementary condition
\be\label{s1}
B_{n+1}(x)+B_n(x)=(x-\alpha_n)A_n(x)-{\rm v}'(x).\tag{$S_1$}
\ee
\end{theorem}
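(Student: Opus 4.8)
The plan is to substitute the integral definitions (\ref{anbn1}) directly into the left-hand side $B_{n+1}(x)+B_n(x)$ and reduce it to the right-hand side using only the three-term recurrence (\ref{pnxt}), the relation $\beta_n=h_n/h_{n-1}$, and a single integration by parts. Writing $K(x,y):=\frac{x{\rm v}'(x)-y{\rm v}'(y)}{x-y}$ for the common kernel, I would first combine the two integrands. Solving (\ref{pnxt}) for $P_{n+1}(y)$ and using $\beta_n/h_n=1/h_{n-1}$ gives
$$\frac{P_{n+1}(y)P_n(y)}{h_n}+\frac{P_n(y)P_{n-1}(y)}{h_{n-1}}=\frac{(y-\alpha_n)P_n^2(y)}{h_n},$$
so that
$$B_{n+1}(x)+B_n(x)=\frac{1}{x}\left[\frac{1}{h_n}\int_0^{\infty}K(x,y)(y-\alpha_n)P_n^2(y)w(y)\,dy-(2n+1)\right].$$

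Next I would split $y-\alpha_n=(y-x)+(x-\alpha_n)$ inside the integral. The second piece immediately reproduces $(x-\alpha_n)\cdot xA_n(x)$ by the definition (\ref{eqn:aa}) of $A_n$. For the first piece, the identity $K(x,y)(y-x)=y{\rm v}'(y)-x{\rm v}'(x)$ collapses the kernel, leaving
$$\frac{1}{h_n}\int_0^{\infty}\big(y{\rm v}'(y)-x{\rm v}'(x)\big)P_n^2(y)w(y)\,dy=\frac{1}{h_n}\int_0^{\infty}y{\rm v}'(y)P_n^2(y)w(y)\,dy-x{\rm v}'(x).$$
Using ${\rm v}'(y)w(y)=-w'(y)$ and integrating by parts, the remaining integral becomes $-\int_0^{\infty}\big(yP_n^2(y)\big)'w(y)\,dy$, which I would evaluate as $(2n+1)h_n$ with the help of $\int_0^{\infty}yP_n(y)P_n'(y)w(y)\,dy=nh_n$ (read off from the $c_{nn}=n$ term in the expansion (\ref{xpn})). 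Substituting everything back, the $\pm(2n+1)$ contributions cancel and the $1/x$ prefactor clears, yielding exactly $(x-\alpha_n)A_n(x)-{\rm v}'(x)$.

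The main obstacle is justifying that the boundary term $\big[yP_n^2(y)w(y)\big]_0^{\infty}$ in the integration by parts vanishes, since this is precisely where the Laguerre-type setting departs from the Chen--Ismail framework: at infinity it is the assumed rapid decrease of $w$, while at the origin one has $yP_n^2(y)w(y)=y^{\lambda+1}P_n^2(y)w_0(y)\to0$ because $\lambda>-1$. Everything else is routine bookkeeping; the built-in factor $1/x$ in (\ref{anbn1}) is exactly what lets the argument go through uniformly even when $\lambda<0$, where the naive Chen--Ismail $A_n,B_n$ would be ill-defined at $x=0$.
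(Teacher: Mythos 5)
Your proposal is correct and follows essentially the same route as the paper: combine the two $B$-integrals via the three-term recurrence, collapse the kernel on the $(y-x)$ piece, and evaluate $\int_0^{\infty}y\mathrm{v}'(y)P_n^2(y)w(y)\,dy=(2n+1)h_n$ by integration by parts. Your explicit justification of the vanishing boundary term (using $\lambda>-1$ at the origin and rapid decrease at infinity) is a point the paper leaves implicit, but the argument is otherwise identical.
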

\begin{proof}
From (\ref{eqn:ab}) we have
$$
B_{n+1}(x)+B_n(x)=\frac{1}{x}\int_0^{\infty}\frac{x\mathrm{v}'(x)-y\mathrm{v}'(y)}{x-y}\left(\frac{P_{n+1}(y)}{h_n}+\frac{P_{n-1}(y)}{h_{n-1}}\right)P_n(y)w(y)dy-\frac{2n+1}{x}.
$$
Using the three-term recurrence relation (\ref{pnxt}) yields
$$
\frac{P_{n+1}(y)}{h_n}+\frac{P_{n-1}(y)}{h_{n-1}}=\frac{y-\alpha_n}{h_n}P_n(y).
$$
It follows that
$$
B_{n+1}(x)+B_n(x)=\frac{1}{x}\cdot\frac{1}{h_n}\int_0^{\infty}\frac{x\mathrm{v}'(x)-y\mathrm{v}'(y)}{x-y}(y-\alpha_n)P_n^2(y)w(y)dy-\frac{2n+1}{x}.
$$
Using (\ref{eqn:aa}) we find
\begin{align}\label{bnm}
B_{n+1}(x)+B_n(x)-(x-\alpha_n)A_n(x)
&=-\frac{1}{x}\cdot\frac{1}{h_n}\int_0^{\infty}(x\mathrm{v}'(x)-y\mathrm{v}'(y))P_n^2(y)w(y)dy-\frac{2n+1}{x}\no\\
&=-\mathrm{v}'(x)+\frac{1}{x}\cdot\frac{1}{h_n}\int_0^{\infty}y\mathrm{v}'(y)P_n^2(y)w(y)dy-\frac{2n+1}{x}.
\end{align}
Through integration by parts,
\begin{align}\label{in}
\int_0^{\infty}y\mathrm{v}'(y)P_n^2(y)w(y)dy&=-\int_0^{\infty}yP_n^2(y)dw(y)\no\\
&=-\Big[yw(y)P_n^2(y)\Big]_0^{\infty}+\int_0^{\infty}\left(yP_n^2(y)\right)'w(y)dy\no\\
&=\int_0^{\infty}P_n^2(y)w(y)dy+\int_0^{\infty}2yP_n(y)P_n'(y)w(y)dy\no\\
&=(2n+1)h_n.
\end{align}
Substituting (\ref{in}) into (\ref{bnm}), we establish the theorem.
\end{proof}
Before we prove the compatibility conditions (\ref{s2}) and (\ref{s2'}), we show that
the combination of the lowering operator equation (\ref{lower}) and the condition (\ref{s1}) produces the raising operator equation.
\begin{theorem}
The monic orthogonal polynomials $P_n(x)$ satisfy the raising operator equation
\be\label{ra}
\left(\frac{d}{dx}-B_n(x)-{\rm v}'(x)\right)P_{n-1}(x)=-A_{n-1}(x)P_n(x).
\ee
\end{theorem}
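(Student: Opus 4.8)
The plan is to combine the lowering operator equation (\ref{lower}) with the first supplementary condition ($S_1$) to eliminate the derivative term and isolate a raising operator. The key observation is that the raising operator equation (\ref{ra}) should follow purely algebraically from the two results already established, without any new integration-by-parts computation.

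First I would write down the three-term recurrence relation (\ref{pnxt}) in the shifted form needed here, and rewrite the lowering operator equation for the index $n$ so that it expresses $P_n'(x)$ in terms of $P_n(x)$ and $P_{n-1}(x)$. Next I would invoke ($S_1$) to substitute for the combination $B_{n+1}(x)+B_n(x)$, or more precisely use the version of the lowering equation at both indices $n$ and $n-1$. The cleanest route is the following: apply the lowering operator equation (\ref{lower}) with $n$ replaced by $n$, giving $P_n'(x) = -B_n(x)P_n(x) + \beta_n A_n(x) P_{n-1}(x)$, and separately the lowering equation at index $n-1$, giving $P_{n-1}'(x) = -B_{n-1}(x)P_{n-1}(x) + \beta_{n-1} A_{n-1}(x) P_{n-2}(x)$. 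Then I would use (\ref{pnxt}) to express $\beta_{n-1}P_{n-2}(x)$ and the other shifted quantities in terms of $P_n(x)$ and $P_{n-1}(x)$, and substitute ($S_1$) (at the appropriate index) to collapse the coefficients.

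Concretely, I expect the mechanism to be: start from the identity we want, namely $P_{n-1}'(x) = (B_n(x)+{\rm v}'(x))P_{n-1}(x) - A_{n-1}(x)P_n(x)$, and verify it by substituting the lowering equation at index $n-1$ for $P_{n-1}'(x)$. This reduces the target to showing
\be\label{raplan}
-B_{n-1}(x)P_{n-1}(x)+\beta_{n-1}A_{n-1}(x)P_{n-2}(x) = (B_n(x)+{\rm v}'(x))P_{n-1}(x)-A_{n-1}(x)P_n(x).
\ee
Rearranging (\ref{raplan}) and using the recurrence (\ref{pnxt}) shifted down by one index to write $P_n(x)=(x-\alpha_{n-1})P_{n-1}(x)-\beta_{n-1}P_{n-2}(x)$, the $P_{n-2}(x)$ terms should cancel automatically, and the surviving coefficient of $P_{n-1}(x)$ will be exactly the statement of ($S_1$) with $n$ replaced by $n-1$, i.e.\ $B_n(x)+B_{n-1}(x)=(x-\alpha_{n-1})A_{n-1}(x)-{\rm v}'(x)$.

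The main obstacle I anticipate is purely bookkeeping: getting the index shifts in ($S_1$) and the recurrence relation consistent, since the raising equation at level $n$ naturally pairs with the supplementary condition at level $n-1$, and it is easy to be off by one. There is no genuine analytic difficulty here — no endpoint terms or integration by parts are needed — so once the algebraic substitution is set up correctly the two polynomial identities reduce to ($S_1$), which we have already proven, and the theorem follows. An alternative and perhaps slicker route is to treat (\ref{lower}) and (\ref{ra}) as a $2\times 2$ linear system: ($S_1$) is precisely the consistency condition that lets one invert the lowering relation to solve for $P_n$ in terms of $P_{n-1}$ and $P_{n-1}'$, which is the content of the raising operator.
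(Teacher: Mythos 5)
Your proposal is correct and follows essentially the same route as the paper's proof: apply the lowering equation (\ref{lower}) at index $n-1$, use the three-term recurrence to replace $\beta_{n-1}P_{n-2}(x)$ by $(x-\alpha_{n-1})P_{n-1}(x)-P_n(x)$, and then invoke (\ref{s1}) with $n$ replaced by $n-1$ to rewrite the coefficient of $P_{n-1}(x)$ as $B_n(x)+\mathrm{v}'(x)$. You have also correctly identified the only delicate point, namely that the raising equation at level $n$ pairs with the supplementary condition at level $n-1$.
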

\begin{proof} From  (\ref{lower}) we have
$$
P_{n-1}'(x)=-B_{n-1}(x)P_{n-1}(x)+\beta_{n-1}A_{n-1}(x)P_{n-2}(x).
$$
Using the three-term recurrence relation and replacing $\beta_{n-1}P_{n-2}(x)$ by $(x-\alpha_{n-1})P_{n-1}(x)-P_n(x)$, it follows that
$$
P_{n-1}'(x)
=\left[(x-\alpha_{n-1})A_{n-1}(x)-B_{n-1}(x)\right]P_{n-1}(x)-A_{n-1}(x)P_n(x).
$$
By making use of (\ref{s1}), we obtain
$$
P_{n-1}'(x)=(B_n(x)+\mathrm{v}'(x))P_{n-1}(x)-A_{n-1}(x)P_n(x).
$$
The proof is complete.
\end{proof}
\begin{corollary}
The monic orthogonal polynomials $P_n(x)$ satisfy the second-order differential equation
\begin{align}\label{ode}
&P_n''(x)-\bigg(\mathrm{v}'(x)+\frac{A_n'(x)}{A_n(x)}\bigg)P_n'(x)+\bigg(B_n'(x)-B_n^2(x)-\mathrm{v}'(x)B_n(x)\no\\[8pt]
&+\beta_nA_n(x)A_{n-1}(x)-\frac{A_n'(x)B_n(x)}{A_n(x)}\bigg)P_n(x)=0.
\end{align}
\end{corollary}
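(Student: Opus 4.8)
The plan is to eliminate both $P_{n-1}(x)$ and $P_{n-1}'(x)$ between the lowering operator equation (\ref{lower}) and the raising operator equation (\ref{ra}), so as to be left with a single second-order equation in $P_n(x)$ alone. This is the standard route from a pair of first-order ladder relations to a second-order differential equation, and since both ladder equations are already established (the lowering equation in Theorem \ref{low 1} and the raising equation in the preceding theorem), the derivation is purely algebraic.

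First I would record two consequences of (\ref{lower}). Writing it as $P_n'(x)+B_n(x)P_n(x)=\beta_nA_n(x)P_{n-1}(x)$, I solve for $P_{n-1}(x)$ to get $\beta_nP_{n-1}(x)=[P_n'(x)+B_n(x)P_n(x)]/A_n(x)$, an identity to be reinserted at the very end. Next I differentiate (\ref{lower}) with respect to $x$, which yields $P_n''(x)+B_n'(x)P_n(x)+B_n(x)P_n'(x)=\beta_nA_n'(x)P_{n-1}(x)+\beta_nA_n(x)P_{n-1}'(x)$. The pivotal step is to remove the stray derivative $P_{n-1}'(x)$ using the raising operator equation (\ref{ra}) in the form $P_{n-1}'(x)=(B_n(x)+\mathrm{v}'(x))P_{n-1}(x)-A_{n-1}(x)P_n(x)$. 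After this substitution the right-hand side involves only $P_{n-1}(x)$, carrying the factor $\beta_n[A_n'(x)+A_n(x)(B_n(x)+\mathrm{v}'(x))]$, together with the term $-\beta_nA_n(x)A_{n-1}(x)P_n(x)$. Reinserting the expression for $\beta_nP_{n-1}(x)$ from the first step then converts every remaining $P_{n-1}$ into a combination of $P_n'(x)$ and $P_n(x)$, and in doing so produces the logarithmic-derivative factor $A_n'(x)/A_n(x)$. Collecting the coefficients of $P_n''(x)$, $P_n'(x)$ and $P_n(x)$ should reproduce (\ref{ode}).

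I do not expect a genuine conceptual obstacle here; the only real care is in the algebraic bookkeeping, especially in tracking the $A_n'(x)/A_n(x)$ terms that arise both from differentiating $\beta_nA_n(x)P_{n-1}(x)$ and from dividing by $A_n(x)$ when $P_{n-1}(x)$ is reinserted. The verification reduces to two checks: that the coefficient of $P_n'(x)$ collapses to exactly $-(\mathrm{v}'(x)+A_n'(x)/A_n(x))$, and that the $B_n$-linear and $B_n^2$ contributions assemble into $B_n'(x)-B_n^2(x)-\mathrm{v}'(x)B_n(x)-A_n'(x)B_n(x)/A_n(x)$, with the leftover $\beta_nA_n(x)A_{n-1}(x)P_n(x)$ supplying the final term displayed in (\ref{ode}).
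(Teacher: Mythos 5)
Your proposal is correct and follows exactly the paper's route: the paper's proof is the one-line statement that eliminating $P_{n-1}(x)$ between the lowering equation (\ref{lower}) and the raising equation (\ref{ra}) gives (\ref{ode}), and your differentiation-then-substitution scheme is precisely that elimination carried out explicitly. The algebra checks out, with the coefficient of $P_n'(x)$ collapsing to $-(\mathrm{v}'(x)+A_n'(x)/A_n(x))$ as you anticipated.
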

\begin{proof}
Eliminating $P_{n-1}(x)$ from the ladder operator equations (\ref{lower}) and (\ref{ra}) gives the desired result.
\end{proof}
\begin{theorem}
The functions $A_n(x)$ and $B_n(x)$ satisfy the second supplementary condition
\be\label{s2}
1+(x-\alpha_n)(B_{n+1}(x)-B_n(x))=\beta_{n+1}A_{n+1}(x)-\beta_nA_{n-1}(x).\tag{$S_2$}
\ee
\end{theorem}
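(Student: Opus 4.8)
The plan is to adapt the computation just used for $(S_1)$ from the diagonal product $P_n^2(y)$ to the mixed product $P_{n+1}(y)P_n(y)$. Abbreviate the kernel by $K(x,y):=\frac{x\mathrm{v}'(x)-y\mathrm{v}'(y)}{x-y}$, so that (\ref{anbn1}) reads $xA_n(x)=\frac1{h_n}\int_0^\infty K(x,y)P_n^2(y)w(y)\,dy$ and $xB_n(x)+n=\frac1{h_{n-1}}\int_0^\infty K(x,y)P_n(y)P_{n-1}(y)w(y)\,dy$. I would keep the prefactor $x$ outside these bare integrals throughout, dividing it off only in the last line; this is precisely what prevents a spurious pole at the origin. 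The whole argument rests on evaluating the single quantity
\[ I:=\frac1{h_n}\int_0^\infty y\,K(x,y)P_{n+1}(y)P_n(y)w(y)\,dy \]
in two different ways and equating them.

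For the first evaluation I would split $yK(x,y)=xK(x,y)-\big(x\mathrm{v}'(x)-y\mathrm{v}'(y)\big)$. The first term gives $x\big(xB_{n+1}(x)+n+1\big)$ by the definition above; the part carrying $x\mathrm{v}'(x)$ vanishes by orthogonality since $\int_0^\infty P_{n+1}P_n w=0$; and the surviving integral $\frac1{h_n}\int_0^\infty y\mathrm{v}'(y)P_{n+1}P_n w\,dy$ is evaluated by the same integration by parts as in (\ref{in}), using $\mathrm{v}'w=-w'$ and the vanishing of the endpoint terms, leaving only the $P_n$-component of $yP_{n+1}'(y)$ and hence the value $-{\rm\bf p}(n+1)$. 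Altogether this yields $I=x\big(xB_{n+1}(x)+n+1\big)-{\rm\bf p}(n+1)$.

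For the second evaluation I would substitute the recurrence (\ref{pnxt}) in the form $yP_n(y)=P_{n+1}(y)+\alpha_nP_n(y)+\beta_nP_{n-1}(y)$ under the integral sign. This produces three pieces: a diagonal piece $\frac1{h_n}\int_0^\infty K P_{n+1}^2 w\,dy=\beta_{n+1}\,xA_{n+1}$ (using $\beta_{n+1}=h_{n+1}/h_n$), a piece $\alpha_n\big(xB_{n+1}+n+1\big)$, and the off-diagonal piece $\frac{\beta_n}{h_n}\int_0^\infty K P_{n+1}P_{n-1}w\,dy$. The last of these is the crux: it does not collapse to the defining integrals of $A$ and $B$ in one step, so I would unwind it by applying the recurrence once more as $P_{n+1}=(y-\alpha_n)P_n-\beta_nP_{n-1}$, reducing it to $\int yKP_nP_{n-1}w$ (supplied by the index-shifted first evaluation), $\int KP_nP_{n-1}w$ and $\int KP_{n-1}^2w$; the normalization ratios telescope because $\beta_nh_{n-1}/h_n=1$.

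Equating the two evaluations, the terms $x\big(xB_{n+1}+n+1\big)$ and $\alpha_n\big(xB_{n+1}+n+1\big)$ recombine into $(x-\alpha_n)\big(xB_{n+1}+n+1\big)$, and after using ${\rm\bf p}(n+1)-{\rm\bf p}(n)=-\alpha_n$ from (\ref{alpha}) the identity collapses to
\[ (x-\alpha_n)\big[(xB_{n+1}+n+1)-(xB_n+n)\big]=x\beta_{n+1}A_{n+1}-x\beta_nA_{n-1}-\alpha_n. \]
Writing the bracket as $x(B_{n+1}-B_n)+1$ and dividing through by $x$ gives $(S_2)$ verbatim, the leftover $-\alpha_n$ being exactly cancelled by the constant thrown off when the $+1$ is multiplied out. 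The main obstacle is the off-diagonal integral $\int_0^\infty K(x,y)P_{n+1}(y)P_{n-1}(y)w(y)\,dy$, which forces the second pass through the recurrence; a subsidiary but essential point, and the reason the modified kernel is used at all, is checking that every boundary term vanishes at $x=0$ when $-1<\lambda<0$, which holds because $w(x)=x^\lambda w_0(x)$ with $\lambda+1>0$ makes $x\,w(x)\to0$.
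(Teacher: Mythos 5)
Your argument is correct, and every step checks out (I verified the bookkeeping: equating your two evaluations of $I$ and dividing by $x$ does reproduce $(S_2)$ exactly, with the $-\alpha_n$ cancelling as you describe), but it is organized quite differently from the paper's proof. The paper starts directly from $(x-\alpha_n)\bigl(B_{n+1}(x)-B_n(x)\bigr)$, splits $x-\alpha_n=(x-y)+(y-\alpha_n)$ under the integral, and exploits the fact that the \emph{difference} of the two $B$'s carries the factor $\tfrac{P_{n+1}(y)}{h_n}-\tfrac{P_{n-1}(y)}{h_{n-1}}=\tfrac{1}{h_n}\bigl(P_{n+1}(y)-\beta_nP_{n-1}(y)\bigr)$; replacing $(y-\alpha_n)P_n(y)$ by $P_{n+1}(y)+\beta_nP_{n-1}(y)$ then produces the difference of squares $P_{n+1}^2(y)-\beta_n^2P_{n-1}^2(y)$, which collapses immediately to $\beta_{n+1}A_{n+1}(x)-\beta_nA_{n-1}(x)$. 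The off-diagonal integral $\int_0^\infty K(x,y)P_{n+1}(y)P_{n-1}(y)w(y)\,dy$ --- which you rightly identify as the main obstacle in your route and which forces you through a second pass of the recurrence plus an index-shifted copy of your first evaluation --- simply never appears in the paper's computation. The trade-off: the paper's proof is shorter and more self-contained once one sees the difference-of-squares pairing, while your ``evaluate one integral two ways'' scheme is more systematic and requires no such ad hoc pairing, at the cost of the extra unwinding step. Both proofs rely on the same underlying ingredients (the kernel identity, integration by parts with $\mathrm{v}'w=-w'$, the vanishing of $yw(y)$ at $0$ for $\lambda>-1$, and ${\rm\bf p}(n)-{\rm\bf p}(n+1)=\alpha_n$), and your explicit attention to keeping the prefactor $x$ outside the bare integrals correctly mirrors the reason the modified kernel is needed for $-1<\lambda<0$.
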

\begin{proof}
Using the definition of $B_n(x)$ in (\ref{eqn:ab}), we have
\begin{align}
&(x-\alpha_n)(B_{n+1}(x)-B_n(x))\no\\
&=(x-\alpha_n)\cdot\frac{1}{x}\int_0^{\infty}\frac{x\mathrm{v}'(x)-y\mathrm{v}'(y)}{x-y}\left(\frac{P_{n+1}(y)}{h_n}-\frac{P_{n-1}(y)}{h_{n-1}}\right)P_n(y)w(y)dy-\frac{x-\alpha_n}{x}
\no\\
&=\frac{1}{x}\int_0^{\infty}[(x-y)+(y-\alpha_n)]\frac{x\mathrm{v}'(x)-y\mathrm{v}'(y)}{x-y}\left(\frac{P_{n+1}(y)}{h_n}-\frac{P_{n-1}(y)}{h_{n-1}}\right)P_n(y)w(y)dy
-\frac{x-\alpha_n}{x}\no\\
&=\frac{1}{x}\int_0^{\infty}(x\mathrm{v}'(x)-y\mathrm{v}'(y))\left(\frac{P_{n+1}(y)}{h_n}-\frac{P_{n-1}(y)}{h_{n-1}}\right)P_n(y)w(y)dy\no\\
&\quad+\frac{1}{x}\cdot\frac{1}{h_n}\int_0^{\infty}(y-\alpha_n)P_n(y)(P_{n+1}(y)-\bt_nP_{n-1}(y))\frac{x\mathrm{v}'(x)-y\mathrm{v}'(y)}{x-y}w(y)dy-\frac{x-\alpha_n}{x}.\no
\end{align}
Replacing $(y-\alpha_n)P_n(y)$ by $P_{n+1}(y)+\bt_nP_{n-1}(y)$ yields
\begin{align}\label{eq}
&(x-\alpha_n)(B_{n+1}(x)-B_n(x))\no\\
&=\frac{1}{x}\cdot\frac{1}{h_{n-1}}\int_0^{\infty}yP_n(y)P_{n-1}(y)\mathrm{v}'(y)w(y)dy-\frac{1}{x}\cdot\frac{1}{h_n}\int_0^{\infty}y P_{n+1}(y)P_n(y)\mathrm{v}'(y)w(y)dy\no\\
&\quad+\frac{1}{x}\cdot\frac{1}{h_n}\int_0^{\infty}\left(P_{n+1}^2(y)-\bt_n^2P_{n-1}^2(y)\right)\frac{x\mathrm{v}'(x)-y\mathrm{v}'(y)}{x-y}w(y)dy-\frac{x-\alpha_n}{x}.
\end{align}
Through integration by parts, we find
\begin{align}
\frac{1}{h_{n-1}}\int_0^{\infty}yP_n(y)P_{n-1}(y)\mathrm{v}'(y)w(y)dy&=-\frac{1}{h_{n-1}}\int_0^{\infty}yP_n(y)P_{n-1}(y)dw(y)\no\\
&=\frac{1}{h_{n-1}}\int_0^{\infty}yP_n'(y)P_{n-1}(y)w(y)dy\no\\
&=-{\rm\bf p}(n),\no
\end{align}
where use has been made of the fact that
$$yP_n'(y)=nP_n(y)-{\rm\bf p}(n)P_{n-1}(y)+\mathrm{lower}\; \mathrm{degree}\; \mathrm{polynomials}.$$
In view of (\ref{alpha}), it follows from (\ref{eq}) that
\begin{align}
&1+(x-\alpha_n)(B_{n+1}(x)-B_n(x))\no\\
&=\frac{1}{x}\cdot\frac{1}{h_n}\int_0^{\infty}\left(P_{n+1}^2(y)-\bt_n^2P_{n-1}^2(y)\right)\frac{x\mathrm{v}'(x)-y\mathrm{v}'(y)}{x-y}w(y)dy\no\\
&=\beta_{n+1}A_{n+1}(x)-\beta_nA_{n-1}(x).\no
\end{align}
This completes the proof.
\end{proof}
The combination of (\ref{s1}) and (\ref{s2}) produces the sum rule (\ref{s2'}) in the following theorem.
\begin{theorem}
The functions $A_n(x)$ and $B_n(x)$ satisfy the condition
\be\label{s2'}
B_n^2(x)+{\rm v}'(x)B_n(x)+\sum_{j=0}^{n-1}A_j(x)=\beta_nA_n(x)A_{n-1}(x). \tag{$S_2'$}
\ee
\end{theorem}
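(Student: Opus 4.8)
The plan is to obtain $(S_2')$ as a purely algebraic consequence of $(S_1)$ and $(S_2)$ via a telescoping argument, without reopening the integral representations in $(\ref{anbn1})$. First I would multiply the second supplementary condition $(S_2)$ through by $A_n(x)$, which produces
\[
A_n(x)+(x-\alpha_n)A_n(x)\bigl(B_{n+1}(x)-B_n(x)\bigr)=\beta_{n+1}A_{n+1}(x)A_n(x)-\beta_nA_n(x)A_{n-1}(x).
\]
The decisive step is then to invoke $(S_1)$ in the rearranged form $(x-\alpha_n)A_n(x)=B_{n+1}(x)+B_n(x)+\mathrm{v}'(x)$ to eliminate the factor $(x-\alpha_n)A_n(x)$. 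The middle term becomes $\bigl(B_{n+1}+B_n+\mathrm{v}'\bigr)\bigl(B_{n+1}-B_n\bigr)=B_{n+1}^2-B_n^2+\mathrm{v}'(B_{n+1}-B_n)$, so that a difference of squares emerges naturally.

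Next I would recognize the resulting identity as the first difference of the expression one wants to show is constant. Setting
\[
F_n(x):=B_n^2(x)+\mathrm{v}'(x)B_n(x)+\sum_{j=0}^{n-1}A_j(x)-\beta_nA_n(x)A_{n-1}(x),
\]
the identity from the previous step rearranges exactly to $F_{n+1}(x)-F_n(x)=0$: the freshly produced summand $A_n(x)$ accounts for extending the sum from $n-1$ to $n$, while the two $\beta$-products supply the telescoping endpoints. Hence $F_n(x)$ does not depend on $n$.

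Finally I would pin down the constant by evaluating at $n=0$. There the sum $\sum_{j=0}^{-1}A_j$ is empty, $\beta_0=0$ by the initial condition $\beta_0P_{-1}(x)=0$, and $B_0(x)=0$ since the integral in $(\ref{eqn:ab})$ carries the factor $P_{-1}(y)\equiv 0$ while the subtracted $n$ is also $0$. Thus $F_0(x)=0$, and therefore $F_n(x)\equiv 0$ for every $n$, which is precisely $(S_2')$.

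I expect the only genuinely delicate point to be this base case: one must verify that $B_0(x)$ really vanishes and that the $\beta_0$-term drops, so that the common ($n$-independent) value of $F_n$ is fixed at zero rather than at some nonzero function of $x$. By contrast, the difference-of-squares bookkeeping, once the substitution from $(S_1)$ has been made, is entirely routine.
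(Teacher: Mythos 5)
Your proposal is correct and follows essentially the same route as the paper: multiply $(S_2)$ by $A_n(x)$, substitute $(x-\alpha_n)A_n(x)=B_{n+1}(x)+B_n(x)+\mathrm{v}'(x)$ from $(S_1)$ to produce the difference of squares, and telescope down to $n=0$ using the conventions $B_0(x)=0$ and $\beta_0A_{-1}(x)=0$. Your explicit check of the base case is a welcome touch of care, but the argument is the same one the authors give.
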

\begin{proof}
Multiplying by $A_n(x)$ on both sides of  (\ref{s2}), we have
$$
A_n(x)+(x-\alpha_n)A_n(x)(B_{n+1}(x)-B_n(x))
=\beta_{n+1}A_{n+1}(x)A_n(x)-\beta_nA_n(x)A_{n-1}(x).
$$
Using (\ref{s1}) to replace $(x-\alpha_n)A_n(x)$ by $B_{n+1}(x)+B_n(x)+\mathrm{v}'(x)$ gives
$$
A_n(x)+B_{n+1}^2(x)-B_n^2(x)+\mathrm{v}'(x)(B_{n+1}(x)-B_n(x))
=\beta_{n+1}A_{n+1}(x)A_n(x)-\beta_nA_n(x)A_{n-1}(x).
$$
Taking a telescopic sum, together with the initial conditions $B_0(x)=0,\; \beta_0A_{-1}(x)=0$, we establish the theorem.
\end{proof}
\begin{remark}
Using (\ref{s2'}), the differential equation (\ref{ode}) can be written in the form
$$
P_n''(x)-\left(\mathrm{v}'(x)+\frac{A_{n}'(x)}{A_{n}(x)}\right)P_n'(x)+\left(B_{n}'(x)-B_{n}(x)\frac{A_{n}'(x)}{A_{n}(x)}
+\sum_{j=0}^{n-1}A_{j}(x)\right)P_n(x)=0.
$$
\end{remark}
\section{Generalized Airy polynomials and Hankel determinants}
In this section, we apply the ladder operators and compatibility conditions to the generalized Airy polynomials.
Recall that the generalized Airy weight reads
$$
w(x)=x^{\lambda}\mathrm{e}^{-\frac{1}{3}x^3+tx},\qquad x\in\mathbb{R}^+,
$$
with parameters $\lambda>-1,\; t\in\mathbb{R}$. It follows that
\be\label{po}
\mathrm{v}(x)=\frac{1}{3}x^3-tx-\lambda\ln x
\ee
and
\be\label{vd}
\frac{x\mathrm{v}'(x)-y\mathrm{v}'(y)}{x-y}=x^2+xy+y^2-t.
\ee

Inserting (\ref{vd}) into the definitions of $A_n(x)$ and $B_n(x)$ in (\ref{anbn1}), we obtain
\be\label{anbn}
A_n(x)=x+\alpha_n+\frac{R_n}{x},\qquad B_n(x)=\beta_n+\frac{r_n}{x},
\ee
where
$$
R_n=\frac{1}{h_n}\int_{0}^{\infty}y^2P_n^2(y)w(y)dy-t,
$$
$$
r_n=\frac{1}{h_{n-1}}\int_{0}^{\infty}y^2P_n(y)P_{n-1}(y)w(y)dy-n.
$$
\begin{remark}\label{re}
It is easy to see that
$$
\frac{\mathrm{v}'(x)-\mathrm{v}'(y)}{x-y}=x+y+\frac{\la}{xy}.
$$
This leads to the fact that the expressions of $A_n(x)$ and $B_n(x)$ in \cite[Theorem 3.10]{Clarkson4} are undefined for $\la<0$.
\end{remark}
Substituting (\ref{anbn}) into (\ref{s1}), we obtain two identities by equating the powers of $x$ on both sides:
\be\label{s11}
\bt_{n+1}+\bt_n=R_n -\al_n^2 +t,
\ee
\be\label{s12}
r_{n+1}+r_n=\lambda-\al_nR_n.
\ee
Similarly, substituting (\ref{anbn}) into (\ref{s2'}) produces the following four identities:
\be\label{s21}
r_n+n=\bt_n(\al_n+\al_{n-1}),
\ee
\be\label{s22}
r_n^2-\lambda r_n=\bt_nR_nR_{n-1},
\ee
\be\label{s23}
\bt_n^2-t\bt_n+\sum_{j=0}^{n-1}\al_j=\bt_n(\al_n\al_{n-1}+R_n+R_{n-1}),
\ee
$$
2\bt_nr_n-\lambda\bt_n-tr_n+\sum_{j=0}^{n-1}R_j=\bt_n(\al_nR_{n-1}+\al_{n-1}R_n).
$$
From (\ref{s11}) and (\ref{s21}), we can express the auxiliary quantities $R_n$ and $r_n$ in terms of the recurrence coefficients:
\begin{subequations}\label{Rnrn}
\be\label{Rn}
R_n=\alpha_n^2+\beta_n+\beta_{n+1}-t,
\ee
\be
r_n=(\alpha_n+\alpha_{n-1})\beta_n-n.
\ee
\end{subequations}
\begin{theorem}
The recurrence coefficients $\alpha_n$ and $\beta_n$ for the generalized Airy polynomials satisfy the discrete system
\begin{subequations}\label{dieq}
\be
\alpha_n^3-t\alpha_n+(2\alpha_n+\alpha_{n-1})\beta_n+(2\alpha_n+\alpha_{n+1})\beta_{n+1}=2n+\lambda+1,\label{dieq1}
\ee
\be
[(\alpha_n+\alpha_{n-1})\beta_n-n]^2-\la[(\alpha_n+\alpha_{n-1})\beta_n-n]=\bt_n(\alpha_n^2+\beta_n+\beta_{n+1}-t)(\alpha_{n-1}^2+\beta_{n-1}+\beta_{n}-t).
\ee
\end{subequations}
\end{theorem}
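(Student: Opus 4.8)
The plan is to derive both equations of the discrete system (\ref{dieq}) purely algebraically from the supplementary identities established above, together with the closed-form expressions (\ref{Rnrn}) for the auxiliary quantities $R_n$ and $r_n$. No new analytic input is required: once $R_n$ and $r_n$ have been written entirely in terms of $\al_n$ and $\bt_n$, everything reduces to substitution and rearrangement. In particular I expect (\ref{dieq1}) to come from (\ref{s12}) and the second equation of the system to come from (\ref{s22}).

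First I would establish (\ref{dieq1}) starting from the identity (\ref{s12}), namely $r_{n+1}+r_n=\lambda-\al_nR_n$. Using the second relation of (\ref{Rnrn}) I replace $r_n$ by $(\al_n+\al_{n-1})\bt_n-n$ and, after shifting the index by one, $r_{n+1}$ by $(\al_{n+1}+\al_n)\bt_{n+1}-(n+1)$, so that the left-hand side becomes $(\al_{n+1}+\al_n)\bt_{n+1}+(\al_n+\al_{n-1})\bt_n-(2n+1)$. On the right-hand side I substitute $R_n=\al_n^2+\bt_n+\bt_{n+1}-t$ from (\ref{Rn}), which gives $\al_nR_n=\al_n^3+\al_n\bt_n+\al_n\bt_{n+1}-t\al_n$. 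Collecting the two $\bt_{n+1}$-contributions into $(2\al_n+\al_{n+1})\bt_{n+1}$, the two $\bt_n$-contributions into $(2\al_n+\al_{n-1})\bt_n$, and moving the cubic term $\al_n^3$ and the linear term $-t\al_n$ to the left, I then obtain (\ref{dieq1}) directly.

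The second equation is even more immediate: I would simply substitute the expressions (\ref{Rnrn}) into the identity (\ref{s22}), $r_n^2-\lambda r_n=\bt_nR_nR_{n-1}$. Writing $r_n=(\al_n+\al_{n-1})\bt_n-n$, $R_n=\al_n^2+\bt_n+\bt_{n+1}-t$, and $R_{n-1}=\al_{n-1}^2+\bt_{n-1}+\bt_n-t$ reproduces the stated equation verbatim, with no further simplification needed.

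The only point demanding care is the index bookkeeping in the first step, in particular remembering that $r_{n+1}$ carries the constant $-(n+1)$ rather than $-n$; this is precisely what produces the $2n+1$ and hence, together with $\lambda$, the right-hand side $2n+\lambda+1$. Since the whole argument is a finite algebraic manipulation drawing only on results proved earlier, there is no genuine conceptual obstacle, and the main risk is merely an arithmetic slip when combining the $\al_n\bt_n$ and $\al_n\bt_{n+1}$ terms coming from $\al_nR_n$ with the corresponding $\bt_n$ and $\bt_{n+1}$ terms arising from $r_{n+1}+r_n$.
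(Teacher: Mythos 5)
Your proposal is correct and is exactly the paper's argument: the paper proves the theorem by substituting the expressions (\ref{Rnrn}) for $R_n$ and $r_n$ into (\ref{s12}) and (\ref{s22}), which is precisely your two-step substitution (with the same index shift producing the $2n+\lambda+1$). Nothing is missing.
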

\begin{proof}
Substituting (\ref{Rnrn}) into (\ref{s12}) and (\ref{s22}) respectively, we establish the theorem.
See also \cite[Theorem 3.12]{Clarkson4}.
\end{proof}

\begin{theorem} The sub-leading coefficient ${\rm\bf p}(n)$ can be expressed in terms of the recurrence coefficients as follows:
\be\label{pnt}
{\rm\bf p}(n)=-\beta_n(\alpha_{n-1}^2+\alpha_{n-1}\alpha_n+\alpha_n^2+\beta_{n-1}+\beta_n+\beta_{n+1}-t).
\ee
\end{theorem}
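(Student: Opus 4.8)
The plan is to read off ${\rm\bf p}(n)$ directly from the third identity produced by the sum rule $(S_2')$, namely (\ref{s23}), in which the partial sum $\sum_{j=0}^{n-1}\alpha_j$ already appears on the left-hand side. Since the telescoping identity (\ref{suma}) gives $\sum_{j=0}^{n-1}\alpha_j=-{\rm\bf p}(n)$, equation (\ref{s23}) can be rearranged into a closed-form expression for ${\rm\bf p}(n)$ in terms of $\beta_n$, the auxiliary quantities $R_n,R_{n-1}$, and the product $\alpha_n\alpha_{n-1}$, after which the explicit formula (\ref{Rn}) eliminates $R_n$ and $R_{n-1}$ in favour of the recurrence coefficients.

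Concretely, I would first substitute (\ref{suma}) into (\ref{s23}) and solve for ${\rm\bf p}(n)$, obtaining
\[
-{\rm\bf p}(n)=\beta_n\bigl(\alpha_n\alpha_{n-1}+R_n+R_{n-1}\bigr)-\beta_n^2+t\beta_n .
\]
Next I would insert $R_n=\alpha_n^2+\beta_n+\beta_{n+1}-t$ from (\ref{Rn}) together with its index-shifted counterpart $R_{n-1}=\alpha_{n-1}^2+\beta_{n-1}+\beta_n-t$, so that $R_n+R_{n-1}=\alpha_n^2+\alpha_{n-1}^2+\beta_{n-1}+2\beta_n+\beta_{n+1}-2t$. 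The right-hand side then becomes a polynomial in the recurrence coefficients, and the only remaining task is to collect the $\beta_n$-proportional terms: the contribution $\beta_n(2\beta_n-t)$ coming from $R_n+R_{n-1}$ combines with the explicit $-\beta_n^2+t\beta_n$ tail so that the net coefficient of $\beta_n^2$ collapses from the naive $-2\beta_n^2$ plus $+\beta_n^2$ down to a single $-\beta_n^2$, and the $t\beta_n$ terms cancel. What survives is precisely $-\beta_n\bigl(\alpha_{n-1}^2+\alpha_{n-1}\alpha_n+\alpha_n^2+\beta_{n-1}+\beta_n+\beta_{n+1}-t\bigr)$, which is the claimed formula (\ref{pnt}).

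There is no genuine obstacle here: the statement is a direct algebraic consequence of the already-established compatibility relation (\ref{s23}), the telescoping formula (\ref{suma}), and the auxiliary expression (\ref{Rn}). The only point demanding care is the bookkeeping of the $\beta_n$ terms in the final simplification, since $\beta_n$ enters both explicitly in (\ref{s23}) and implicitly through $R_n$ and $R_{n-1}$; one must check that the quadratic-in-$\beta_n$ and linear-in-$t$ contributions combine exactly as above. As a consistency check I would note that the bracket in (\ref{pnt}) is symmetric under $\alpha_{n-1}\leftrightarrow\alpha_n$ and $\beta_{n-1}\leftrightarrow\beta_{n+1}$, which matches the structural symmetry of (\ref{s23}) about the index $n$ and makes an arithmetic slip easy to detect.
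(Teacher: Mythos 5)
Your proposal is correct and follows essentially the same route as the paper: combining (\ref{suma}) with (\ref{s23}) to obtain ${\rm\bf p}(n)=-\beta_n(\alpha_n\alpha_{n-1}+R_n+R_{n-1}-\beta_n+t)$ and then substituting (\ref{Rn}) and its index-shifted counterpart. One minor narration slip that does not affect your (correct) final formula: $R_n+R_{n-1}$ contributes $\beta_n(2\beta_n-2t)$ rather than $\beta_n(2\beta_n-t)$, and the $t\beta_n$ terms do not fully cancel but net to the single $-t$ surviving inside the bracket.
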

\begin{proof}
Combining (\ref{suma}) and (\ref{s23}) yields
$$
{\rm\bf p}(n)=-\beta_n(\alpha_n\alpha_{n-1}+ R_n +R_{n-1}-\beta_n+t).
$$
Substituting (\ref{Rn}) into the above gives the desired result.
\end{proof}

Let $H_n(t)$ be the logarithmic derivative of the Hankel determinant, namely,
\be\label{Hntd}
H_n(t):=\frac{d}{dt}\ln D_n(t).
\ee
Next, we would like to derive the expression of $H_n(t)$ in terms of the recurrence coefficients.
From the orthogonality condition (\ref{orthre}), we have
$$
\int_0^{\infty}P_n^2(x;t)w(x;t)dx= h_n(t)
$$
and
$$
\int_0^\infty P_n(x;t)P_{n-1}(x;t)w(x;t)dx=0.
$$
Taking derivatives with respect to $t$, we obtain
\be\label{hna}
\frac{d}{dt}\ln h_n(t)=\alpha_n
\ee
and
\be\label{dpn}
\frac{d}{dt}{\rm\bf p}(n,t)=-\beta_n,
\ee
respectively.
\begin{remark}
It is easy to show that the recurrence coefficients satisfy the Toda system from (\ref{alpha}), (\ref{beta}), (\ref{hna}) and (\ref{dpn}):
\begin{subequations}\label{ts}
\be
\frac{d\alpha_n}{dt}=\beta_{n+1}-\beta_n,
\ee
\be
\frac{d\beta_n}{dt}=\beta_n(\alpha_n-\alpha_{n-1}).
\ee
\end{subequations}
In fact, the Toda system (\ref{ts}) hold for a very general class of orthogonal polynomials with the weight of the form $w(x;t)=\widetilde{w}_0(x)\mathrm{e}^{tx}$ and the weight has finite moments for all $t\in\mathbb{R}$; see \cite{ChenIsmail} and also \cite[Section 2.8]{Ismail}.
\end{remark}
\begin{theorem}
The logarithmic derivative of the Hankel determinant, $H_n(t)$, has two alternative representations as follows:
\begin{align}
H_n(t)
&=-{\rm\bf p}(n,t)\label{Hnt1}\\
&=\beta_n(\alpha_{n-1}^2+\alpha_{n-1}\alpha_n+\alpha_n^2+\beta_{n-1}+\beta_n+\beta_{n+1}-t).\label{Hnt2}
\end{align}
\end{theorem}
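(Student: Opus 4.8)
The plan is to assemble three facts that are already in hand: the product representation of the Hankel determinant, the derivative formula for the normalization constants, and the telescoping identity for the sub-leading coefficient. First I would take the product formula (\ref{dnt1}), $D_n(t)=\prod_{j=0}^{n-1}h_j(t)$, and pass to logarithms, so that the definition (\ref{Hntd}) becomes $H_n(t)=\sum_{j=0}^{n-1}\frac{d}{dt}\ln h_j(t)$. Then I would invoke (\ref{hna}), namely $\frac{d}{dt}\ln h_j(t)=\alpha_j$, which collapses the logarithmic derivative to the partial sum $H_n(t)=\sum_{j=0}^{n-1}\alpha_j$.

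The first representation (\ref{Hnt1}) then follows at once from the telescoping identity (\ref{suma}), which reads $\sum_{j=0}^{n-1}\alpha_j=-{\rm\bf p}(n,t)$; hence $H_n(t)=-{\rm\bf p}(n,t)$. The second representation (\ref{Hnt2}) is obtained simply by substituting the closed-form expression (\ref{pnt}) for ${\rm\bf p}(n)$ in terms of the recurrence coefficients, which converts $-{\rm\bf p}(n,t)$ into $\beta_n(\alpha_{n-1}^2+\alpha_{n-1}\alpha_n+\alpha_n^2+\beta_{n-1}+\beta_n+\beta_{n+1}-t)$.

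Since every ingredient is already established, there is no genuine obstacle here: the statement is essentially a bookkeeping consequence of the product formula together with the earlier identities. As an independent consistency check I would note that differentiating the first representation and using (\ref{dpn}), $\frac{d}{dt}{\rm\bf p}(n,t)=-\beta_n$, yields $\frac{d}{dt}H_n(t)=\beta_n$, a relation that should likewise be recoverable from the second representation via the Toda system (\ref{ts}); verifying that these two routes agree is a useful sanity test that the inputs are mutually consistent, though it is not needed for the proof itself.
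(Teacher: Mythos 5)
Your proposal is correct and follows exactly the paper's own argument: logarithm of the product formula (\ref{dnt1}) plus (\ref{hna}) gives $H_n(t)=\sum_{j=0}^{n-1}\alpha_j$, then (\ref{suma}) yields (\ref{Hnt1}) and substituting (\ref{pnt}) yields (\ref{Hnt2}). The added consistency check via (\ref{dpn}) and the Toda system is a nice sanity test but, as you note, not needed.
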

\begin{proof} From (\ref{Hntd}), (\ref{dnt1}) and (\ref{hna}), we have
$$
H_n(t)=\sum_{j=0}^{n-1}\frac{d}{dt}\ln h_j(t)=\sum_{j=0}^{n-1}\alpha_j.
$$
In view of (\ref{suma}), we obtain (\ref{Hnt1}). By (\ref{pnt}), we arrive at (\ref{Hnt2}).
\end{proof}
The results obtained in this section will play an important role in the derivation of the large $n$ asymptotics of many quantities in the next section.

\section{Large $n$ asymptotics}
It is well-known that Hankel determinants are closely related to partition functions in random matrix theory \cite{Deift,Forrester2010,Mehta}. Let $Z_n(t)$ be the partition function for the unitary random matrix ensemble associated with the weight (\ref{airy}), i.e.,
$$
Z_n(t):=\int_{(0,\infty)^n}\prod_{1\leq i<j\leq n}(x_i-x_j)^2\prod_{k=1}^n x_k^\lambda\mathrm{e}^{-\frac{1}{3}x_k^3+tx_k}dx_k
$$
and the joint probability density function for the eigenvalues $x_1, x_2, \ldots, x_n$ of $n\times n$ Hermitian matrices from the ensemble are given by
$$
p(x_1, x_2, \ldots, x_n)=\frac{1}{Z_n(t)}\prod_{1\leq i<j\leq n}(x_i-x_j)^2\prod_{k=1}^n x_k^\lambda\mathrm{e}^{-\frac{1}{3}x_k^3+tx_k}.
$$
Then we have $D_n(t)=\frac{1}{n!}Z_n(t)$ \cite[(2.2.11)]{Szego}.

If we interpret the eigenvalues $x_1, x_2, \ldots, x_n$ as the positions of $n$ charged particles,
Dyson's Coulomb fluid approach {\cite{Dyson}} shows that in the limit of large $n$ the collection of particles can be approximated as a continuous fluid with a density $\sigma(\cdot)$ supported in $J$ (a subset of $\mathbb R^+$). When $\lambda\geq0$, the potential ${\rm v}(x)$ in (\ref{po}) is convex on $\mathbb R^+$. In this case, $J$ is a single interval, denoted by $(a, b),\; a>0$; see Chen and Ismail \cite{ChenIsmail} and also \cite[p. 198]{ST1997}.

According to \cite{ChenIsmail}, the equilibrium density $\sigma(\cdot)$ is obtained by minimizing the free energy functional
\be\label{fsigma}
F[\sigma]:=\int_a^b{\rm v}(x)\sigma(x)dx-\int_a^b \int_a^b \sigma(x)\ln|x-y|\sigma(y)dxdy
\ee
subject to the normalization condition
\be\label{nocon}
\int_a^b\sigma(x)dx=n.
\ee
Upon minimization, the density $\sigma(x)$ is found to satisfy the integral equation
\be\label{A}
{\rm v}(x)-2\int_a^b \ln|x-y|\sigma(y)dy=A,\qquad x\in(a,b),
\ee
where $A$ is the Lagrange multiplier for the constraint (\ref{nocon}). From (\ref{fsigma}), (\ref{nocon}) and (\ref{A}), it can be shown that \cite[(2.14)]{ChenIsmail}
\be\label{fn}
\frac{\partial F[\sigma]}{\partial n}=A.
\ee

Taking a derivative of (\ref{A}) with respect to $x$ gives
\be\label{sie}
{\rm v}'(x)-2P\int_a^b\frac{\sigma(y)}{x-y}dy=0,\qquad x\in(a,b),
\ee
where $P$ denotes the Cauchy principal value.
From (\ref{sie}) and the normalization condition (\ref{nocon}), it can be found that the endpoints $a$ and $b$ of the support of the density are determined by two supplementary conditions
\be\label{sc1}
\int_a^b\frac{{\rm v}'(x)}{\sqrt{(b-x)(x-a)}}dx=0,
\ee
\be\label{sc2}
\int_a^b\frac{x{\rm v}'(x)}{\sqrt{(b-x)(x-a)}}dx=2\pi n.
\ee
Furthermore, it was shown in {\cite{ChenIsmail}} that the recurrence coefficients have the following asymptotic behavior as $n\to\infty$,
\begin{subequations}\label{albe}
\be
\alpha_n=\frac{a+b}{2}+O\left(\frac{\partial^2A}{\partial t\partial n}\right),
\ee
\be
\beta_n=\left(\frac{b-a}{4}\right)^2\left(1+O\left(\frac{\partial^3A}{\partial n^3}\right)\right).
\ee
\end{subequations}

Substituting (\ref{po}) into (\ref{sc1}) and (\ref{sc2}), we obtain a system of equations:
$$
3X^2-4Y^2-8t-\frac{8\lambda}{Y}=0,
$$
$$
X(5X^2-12Y^2-8t)=16(2n+\la),
$$
where
$$
X=a+b,\qquad\qquad Y=\sqrt{ab}.
$$
This system has a unique solution subject to the conditions $X>0, Y>0$ and the large $n$ series expansions read
\begin{subequations}\label{XY}
\be
X=\frac{4n^{1/3}}{\kappa}+\frac{4 t}{3\ka^2n^{1/3}}+\frac{2\lambda}{3\ka n^{2/3}}-\frac{2 \lambda t}{9\kappa^2n^{4/3}}-\frac{2(t^3-90\lambda^2)}{405 \ka n^{5/3}}+\frac{ 2t(t^3+720\lambda^2)}{1215\ka^2n^{7/3}}+O(n^{-8/3}),\label{eqn:4.13a}
\ee
\be
Y=\frac{5\lambda}{3\ka n^{2/3}}+\frac{5\lambda t}{3\ka^2n^{4/3}}-\frac{5\lambda^2}{9\ka n^{5/3}}+\frac{4\lambda t^2}{27n^2}-\frac{10 \lambda^2 t}{9\ka^2n^{7/3}}+\frac{5\lambda(13t^3+15\lambda^2)}{486\ka n^{8/3}}-\frac{4\lambda^2 t^2}{27n^3}+O(n^{-10/3}),\label{eqn:4.13b}
\ee
\end{subequations}
where $\kappa=\sqrt[3]{10}$ and this notation will be used throughout the following text.
It follows that
\begin{subequations}\label{two}
\be
\frac{a+b}{2}=\frac{X}{2}=\frac{2n^{1/3}}{\kappa}+\frac{2 t}{3\ka^2n^{1/3}}+\frac{\lambda}{3\ka n^{2/3}}-\frac{ \lambda t}{9\kappa^2n^{4/3}}-\frac{t^3-90\lambda^2}{405 \ka n^{5/3}}+\frac{ t(t^3+720\lambda^2)}{1215\ka^2n^{7/3}}+O(n^{-8/3}),\label{eqn:4.14a}
\ee
\begin{align}\label{eqn:4.14b}
\left(\frac{b-a}{4}\right)^2=\frac{X^2-4Y^2}{16}=&\frac{n^{2/3}}{\kappa^2}+\frac{t}{15}+\frac{\lambda}{3\kappa^2n^{1/3}}+\frac{t^2}{90\kappa n^{2/3}}-\frac{t^3+180\lambda^2}{405\kappa^2n^{4/3}}-\frac{\lambda t^2}{270\kappa n^{5/3}}\no\\[8pt]
&-\frac{2\lambda^2 t}{27n^2}+\frac{\lambda(2t^3+345\lambda^2)}{1215\kappa^2n^{7/3}}+\frac{t^2(t^3-3600\lambda^2)}{36450\kappa n^{8/3}}+O(n^{-3}).
\end{align}
\end{subequations}

\begin{lemma}
The Lagrange multiplier $A$ has the following large $n$ asymptotic expansion:
\begin{align}\label{Aas}
A=&-\frac{2}{3}n\ln n+\frac{2}{3}(1+\ln10)n-\frac{2t n^{1/3}}{\kappa}-\frac{\lambda}{3}\ln n+\la\ln\ka-\frac{t^2}{3\kappa^2n^{1/3}}\nonumber\\[8pt]
&-\frac{\lambda t}{3\kappa n^{2/3}}-\frac{\lambda^2}{2n}+\frac{\lambda t^2}{18\kappa^2n^{4/3}}+\frac{t(t^3-360\lambda^2)}{1620\kappa n^{5/3}}+O(n^{-2}).
\end{align}
\end{lemma}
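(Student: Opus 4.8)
The plan is to turn the equilibrium condition (\ref{A}) into a closed expression for the Lagrange multiplier $A$ in terms of the endpoints, and only then feed in the large-$n$ expansions of $X=a+b$ and $Y=\sqrt{ab}$.

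First I would integrate both sides of (\ref{A}) against the normalized weight $\frac{1}{\pi\sqrt{(b-x)(x-a)}}$ over $(a,b)$. Since $\frac1\pi\int_a^b(b-x)^{-1/2}(x-a)^{-1/2}\,dx=1$, the constant right-hand side simply returns $A$; on the left, after interchanging the order of integration and invoking the classical identity
\[
\frac{1}{\pi}\int_a^b\frac{\ln|x-y|}{\sqrt{(b-x)(x-a)}}\,dx=\ln\frac{b-a}{4},\qquad y\in(a,b),
\]
together with the normalization (\ref{nocon}), the logarithmic double integral collapses to $2n\ln\frac{b-a}{4}$. This yields the compact formula
\[
A=\frac{1}{\pi}\int_a^b\frac{\mathrm{v}(x)}{\sqrt{(b-x)(x-a)}}\,dx-2n\ln\frac{b-a}{4}.
\]

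Next I would evaluate the single integral via the substitution $x=\frac{a+b}{2}+\frac{b-a}{2}\cos\theta$, under which each monomial becomes $\frac1\pi\int_0^\pi(c+d\cos\theta)^k\,d\theta$ with $c=\tfrac{X}{2}$ and $d=\tfrac12\sqrt{X^2-4Y^2}$, while the logarithm gives $\frac1\pi\int_0^\pi\ln(c+d\cos\theta)\,d\theta=\ln\frac{c+\sqrt{c^2-d^2}}{2}=\ln\frac{X+2Y}{4}$ (using $c^2-d^2=ab=Y^2$). Inserting $\mathrm{v}(x)=\frac13x^3-tx-\la\ln x$ from (\ref{po}), the cubic and linear parts combine into $\frac{X(5X^2-12Y^2)}{48}-\frac{tX}{2}$. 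I would then eliminate $X(5X^2-12Y^2)$ using the second of the two equations produced by (\ref{sc1}) and (\ref{sc2}), namely $X(5X^2-12Y^2-8t)=16(2n+\la)$, to arrive at
\[
A=\frac{2n+\la}{3}-\frac{tX}{3}-\la\ln\frac{X+2Y}{4}-n\ln\frac{X^2-4Y^2}{16}.
\]

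Finally I would substitute the expansions (\ref{eqn:4.13a})--(\ref{eqn:4.13b}), equivalently (\ref{eqn:4.14a})--(\ref{eqn:4.14b}), and expand in powers of $n^{1/3}$. The leading behaviour falls out at once: $-n\ln\frac{X^2-4Y^2}{16}\sim-\frac23 n\ln n+\frac23 n\ln 10$, which combined with $\frac{2n}{3}$ gives $\frac23(1+\ln10)n$, while $-\la\ln\frac{X+2Y}{4}$ supplies $-\frac\la3\ln n+\la\ln\ka$, and the $\frac\la3$ from $\frac{2n+\la}{3}$ is exactly cancelled by the $-\frac\la3$ coming from the $\la/(3\ka^2n^{1/3})$ term inside the last logarithm. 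The main obstacle is the bookkeeping in the term $-n\ln\frac{X^2-4Y^2}{16}$: the prefactor $n$ raises the required accuracy by one order, so to control the stated error $O(n^{-2})$ the logarithm must be expanded down to $O(n^{-8/3})$ --- precisely the reason (\ref{eqn:4.14b}) is recorded to that order --- and, since the small parameter inside the logarithm already begins at order $n^{-2/3}$, the quadratic, cubic and quartic terms of the $\ln(1+u)$ series all still contribute at the orders needed. Carefully summing these contributions and combining them with the expansion of $-\frac{tX}{3}$ and $-\la\ln\frac{X+2Y}{4}$ then assembles every coefficient in (\ref{Aas}).
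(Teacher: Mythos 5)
Your proposal is correct and follows essentially the same route as the paper: average the equilibrium condition (\ref{A}) against $\frac{1}{\pi\sqrt{(b-x)(x-a)}}$ to get $A=\frac{1}{\pi}\int_a^b\frac{\mathrm{v}(x)}{\sqrt{(b-x)(x-a)}}\,dx-2n\ln\frac{b-a}{4}$, evaluate the integral in closed form in terms of $X$ and $Y$, and then substitute the expansions (\ref{XY}). Your extra step of eliminating $X(5X^2-12Y^2)$ via the supplementary condition before expanding is only a cosmetic simplification of the paper's expression $\frac{1}{48}X(5X^2-12Y^2-24t)-\lambda\ln\frac{X+2Y}{4}-n\ln\frac{X^2-4Y^2}{16}$, and your bookkeeping of where each coefficient of (\ref{Aas}) comes from checks out.
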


\begin{proof}
Similarly as in \cite[Lemma 3]{Min2021},
multiplying both sides of (\ref{A}) by $\frac{1}{\sqrt{(b-x)(x-a)}}$ and integrating with respect to $x$ over the interval $[a,b]$, we have
\begin{align}
A&=\frac{1}{\pi}\int_{a}^{b}\frac{{\rm v}(x)}{\sqrt{(b-x)(x-a)}}dx-2n\ln\frac{b-a}{4}\no\\
&=\frac{1}{48}(a+b)(5a^2-2ab+5b^2-24t)-\lambda\ln\frac{a+b+2\sqrt{ab}}{4}-2n\ln\frac{b-a}{4}\nonumber\\[8pt]
&=\frac{1}{48}X(5X^2-12Y^2-24t)-\lambda\ln\frac{X+2Y}{4}-n\ln\frac{X^2-4Y^2}{16}.\nonumber
\end{align}
Substituting (\ref{XY}) into the above and taking a large $n$ limit, we obtain the desired result.
\end{proof}
With these ingredients in hand, we are now ready to derive the large $n$ asymptotics of the recurrence coefficients.
\begin{theorem} For fixed $t\in \mathbb{R}$, the recurrence coefficients $\alpha_n$ and $\beta_n$ have the following asymptotic expansions as $n\rightarrow\infty$:
\begin{subequations}\label{albe2}
\begin{align}
\alpha_n=&\frac{2n^{1/3}}{\kappa}+\frac{2 t}{3\ka^2n^{1/3}}+\frac{\lambda+1}{3\kappa n^{2/3}}-\frac{(\lambda+1)t}{9\kappa^2n^{4/3}}-\frac{t^3-45(\la-1)(2\la+1)}{405\kappa n^{5/3}}\nonumber\\[8pt]
&+\frac{t\left[t^3+30(24\lambda^2+3\lambda-4)\right]}{1215\kappa^2n^{7/3}}+\frac{(\lambda+1)\left[t^3-15(7\lambda^2-\la-2)\right]}{486\kappa n^{8/3}}+O(n^{-3}),\label{eqn:4.18a}
\end{align}
\begin{align}
\beta_n=&\frac{n^{2/3}}{\kappa^2}+\frac{t}{15}+\frac{\lambda}{3\kappa^2n^{1/3}}+\frac{t^2}{90\kappa n^{2/3}}-\frac{t^3+180\lambda^2-45}{405\kappa^2n^{4/3}}-\frac{\lambda t^2}{270\kappa n^{5/3}}\no\\[8pt]
&-\frac{2t(15\lambda^2-4)}{405 n^2}+\frac{\lambda(2t^3+345\lambda^2-90)}{1215\kappa^2n^{7/3}}+\frac{t^2(t^3-3600\lambda^2+870)}{36450\kappa n^{8/3}}+O(n^{-3}).\label{eqn:4.18b}
\end{align}
\end{subequations}
\end{theorem}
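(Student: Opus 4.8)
The plan is to treat the exact discrete system (\ref{dieq}) as a coupled nonlinear difference equation for the pair $(\alpha_n,\beta_n)$ and to solve it asymptotically by undetermined coefficients, with the Coulomb fluid analysis supplying the \emph{form} of the expansion and its leading coefficients. The continuum computation already gives, via (\ref{albe}), (\ref{eqn:4.14a}) and (\ref{eqn:4.14b}), the leading behaviour $\alpha_n\sim 2n^{1/3}/\kappa$ and $\beta_n\sim n^{2/3}/\kappa^2$, together with the essential information that both sequences possess asymptotic expansions in descending powers of $n^{1/3}$. I would therefore posit the Puiseux ans\"atze
\begin{align}
\alpha_n=\sum_{k\ge-1}\frac{a_k}{n^{k/3}},\qquad \beta_n=\sum_{k\ge-2}\frac{b_k}{n^{k/3}},
\end{align}
where the coefficients $a_k,b_k$ are to be determined and are expected to be polynomials in $t$ and $\lambda$.

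First I would expand the neighbouring coefficients $\alpha_{n\pm1}$, $\beta_{n\pm1}$ through the binomial series $(n\pm1)^{-k/3}=n^{-k/3}\sum_{j\ge0}\binom{-k/3}{j}(\pm1)^j n^{-j}$, which rewrites each shift as a new Puiseux series in $n^{-1/3}$ whose coefficients are linear in the $a_k,b_k$. Substituting these into (\ref{dieq1}) and into the second equation of (\ref{dieq}), expanding the polynomial combinations, and collecting equal powers of $n^{1/3}$ then produces, at each order, two scalar equations. The top order ($n^{1}$ in the first equation, $n^{2}$ in the second) yields the leading relations
\begin{align}
a_{-1}^{3}+6a_{-1}b_{-2}&=2,\\
(2a_{-1}b_{-2}-1)^{2}&=b_{-2}\bigl(a_{-1}^{2}+2b_{-2}\bigr)^{2},
\end{align}
whose unique positive solution, consistent with (\ref{eqn:4.14a}) and (\ref{eqn:4.14b}), is $a_{-1}=2/\kappa$, $b_{-2}=1/\kappa^{2}$ with $\kappa=\sqrt[3]{10}$.

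Each subsequent order then contributes a $2\times2$ linear system for the next pair $(a_k,b_k)$, whose inhomogeneous term is assembled from the already known lower-order coefficients; solving these in turn determines every coefficient, and pushing the computation through order $n^{-8/3}$ delivers (\ref{eqn:4.18a}) and (\ref{eqn:4.18b}). Two independent checks guard against error. The Coulomb fluid series (\ref{eqn:4.14a})--(\ref{eqn:4.14b}) must coincide with the recursive output at the orders where the continuum approximation is exact, while the mismatches at lower orders---for instance the replacement of $\lambda$ by $\lambda+1$ at order $n^{-2/3}$ in $\alpha_n$, and of $t^3+180\lambda^2$ by $t^3+180\lambda^2-45$ at order $n^{-4/3}$ in $\beta_n$---are exactly the discrete corrections that the difference equation is designed to supply. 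Moreover, the Toda system (\ref{ts}) furnishes a stringent consistency test: differentiating the expansion of $\alpha_n$ in $t$ must reproduce $\beta_{n+1}-\beta_n$ order by order, and $\tfrac{d}{dt}\beta_n$ must match $\beta_n(\alpha_n-\alpha_{n-1})$.

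The main obstacle is computational bookkeeping rather than conceptual: the shift expansions inject infinitely many integer-power corrections that must be truncated consistently, and the products in the second equation of (\ref{dieq}) are of high degree, so tracking every contribution to a fixed order is laborious. The one structural point requiring care is to confirm that the $2\times2$ linear system appearing at each order is nonsingular, so that the whole expansion is pinned down uniquely by the leading coefficients; should a resonance arise at some order, the Coulomb fluid data (\ref{two}) or the Toda relations (\ref{ts}) would be invoked to resolve the residual freedom.
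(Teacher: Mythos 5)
Your proposal is correct and follows essentially the same route as the paper: the paper likewise uses the Coulomb fluid results (\ref{albe}), (\ref{two}) and (\ref{Aas}) only to justify the Puiseux ansatz with leading terms $2n^{1/3}/\kappa$ and $n^{2/3}/\kappa^2$, then substitutes that ansatz into the discrete system (\ref{dieq}) and determines the coefficients recursively by matching powers of $n$. Your leading-order relations $a_{-1}^{3}+6a_{-1}b_{-2}=2$ and $(2a_{-1}b_{-2}-1)^{2}=b_{-2}(a_{-1}^{2}+2b_{-2})^{2}$ check out, and your additional Toda-system consistency test is a sensible (if unnecessary) safeguard.
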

\begin{proof}
From (\ref{albe}), (\ref{two}) and (\ref{Aas}), we see that $\alpha_n$ and $\beta_n$ have the following large $n$ expansion forms:
\begin{subequations}\label{albe1}
\be\label{an}
\alpha_n=\frac{2n^{1/3}}{\kappa}+a_0+\sum_{j=1}^{\infty}\frac{a_j}{n^{j/3}},
\ee
\be\label{bn}
\beta_n=\frac{n^{2/3}}{\kappa^2}+b_{-1}n^{1/3}+b_0+\sum_{j=1}^{\infty}\frac{b_j}{n^{j/3}}.
\ee
\end{subequations}
Substituting (\ref{albe1}) into the discrete system (\ref{dieq}) and letting $n\to\infty$, we obtain the expansion coefficients $a_j$ and $b_j$ recursively by equating powers of $n$ on both sides. The first few terms are
\begin{align}
&a_0=0,\qquad b_{-1}=0, \qquad a_1=\frac{2 t}{3\ka^2},\qquad b_0=\frac{t}{15},\qquad a_2=\frac{\lambda+1}{3\kappa},\qquad b_1=\frac{\lambda}{3\kappa^2},\nonumber\\[8pt]
&a_3=0,\qquad b_2=\frac{t^2}{90\kappa},\qquad a_4=-\frac{(\lambda+1)t}{9\kappa^2},\qquad b_3=0,\nonumber\\[8pt]
&a_5=-\frac{t^3-45(\la-1)(2\la+1)}{405\kappa n^{5/3}},\qquad b_4=-\frac{t^3+180\lambda^2-45}{405\kappa^2},\nonumber
\end{align}
and more terms are easily computable. This completes the proof.
\end{proof}
\begin{remark}
We have justified the large $n$ expansion forms of the recurrence coefficients in \cite[(46)]{Clarkson4}.
\end{remark}
\begin{remark}
Comparing (\ref{albe2}) with (\ref{two}) and taking account of (\ref{Aas}), one would find that the formulas in (\ref{albe}) are accurate.
\end{remark}
\begin{theorem} For fixed $t\in \mathbb{R}$, the sub-leading coefficient ${\rm\bf p}(n,t)$ has the following asymptotic expansion as $n\to{\infty}$:
\begin{align}\label{apn}
{\rm\bf p}(n,t)=&-\frac{3n^{4/3}}{2\kappa}-\frac{t n^{2/3}}{\kappa^2}-\frac{\lambda n^{1/3}}{\kappa}-\frac{t^2}{30}-\frac{\lambda t}{3\kappa^2n^{1/3}}-\frac{t^3-90\lambda^2 +15}{270\kappa n^{2/3}}\nonumber\\[8pt]
&+\frac{t(t^3+720\lambda^2-180)}{1620\kappa^2n^{4/3}}+\frac{\lambda(t^3-105\lambda^2 +15)}{810\kappa n^{5/3}}+O(n^{-2}).
\end{align}
\end{theorem}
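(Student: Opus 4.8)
The plan is to start from the exact closed-form expression for the sub-leading coefficient in terms of the recurrence coefficients, namely (\ref{pnt}),
$$
{\rm\bf p}(n,t)=-\beta_n\big(\alpha_{n-1}^2+\alpha_{n-1}\alpha_n+\alpha_n^2+\beta_{n-1}+\beta_n+\beta_{n+1}-t\big),
$$
and to substitute the large $n$ expansions (\ref{eqn:4.18a}) and (\ref{eqn:4.18b}) of $\alpha_n$ and $\beta_n$ established in the previous theorem. Since the right-hand side is a product of $\beta_n$ with a bracket built from neighbouring recurrence coefficients, once every factor is written as a series in powers of $n^{1/3}$ the claimed expansion (\ref{apn}) follows by multiplying the two series and collecting like powers.

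First I would produce the index-shifted expansions $\alpha_{n-1}$, $\beta_{n-1}$ and $\beta_{n+1}$. Because (\ref{albe2}) are series in $n^{1/3}$, I would replace $n$ by $n\mp1$ and re-expand each fractional power through the generalized binomial series $(1\pm 1/n)^{k/3}=1\pm\frac{k}{3n}+\frac{k(k-3)}{18n^2}\pm\cdots$; regrouping then yields $\alpha_{n-1}$ and $\beta_{n\pm1}$ as series in $n^{1/3}$ to the same order as the original expansions. The leading balance is transparent: $\alpha_{n-1}^2+\alpha_{n-1}\alpha_n+\alpha_n^2\sim 12\,n^{2/3}/\kappa^2$ and $\beta_{n-1}+\beta_n+\beta_{n+1}\sim 3\,n^{2/3}/\kappa^2$, so the bracket is $\sim 15\,n^{2/3}/\kappa^2$; multiplying by $\beta_n\sim n^{2/3}/\kappa^2$ and using $\kappa^3=10$ (hence $\kappa^4=10\kappa$) reproduces the leading term $-\tfrac{3}{2\kappa}n^{4/3}$ of (\ref{apn}).

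The main obstacle is precision tracking rather than any conceptual difficulty. Because ${\rm\bf p}(n,t)$ has leading order $n^{4/3}$ while the target accuracy is $O(n^{-2})$, the product must be controlled to relative order $n^{-10/3}$; this forces me to retain every term listed in (\ref{eqn:4.18a})--(\ref{eqn:4.18b}) down to $O(n^{-3})$, and, crucially, to carry the binomial re-expansions of the shifted quantities to the matching order, since subleading terms of $\alpha_n$ feed through the cross terms $2\alpha_n\,\delta$ in $\alpha_n^2$ (and likewise in $\alpha_{n-1}^2$ and $\alpha_{n-1}\alpha_n$) into the final $O(n^{-2})$ coefficient. Keeping these contributions consistent across the three neighbouring indices is the delicate bookkeeping step.

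Finally, I would use the Toda relation (\ref{dpn}), $\frac{d}{dt}{\rm\bf p}(n,t)=-\beta_n$, as an independent consistency check: differentiating the candidate expansion (\ref{apn}) with respect to $t$ must reproduce $-\beta_n$ term by term from (\ref{eqn:4.18b}), which pins down all the $t$-dependent coefficients and leaves only the $t$-independent terms (such as $-\tfrac{3}{2\kappa}n^{4/3}$ and $-\lambda n^{1/3}/\kappa$) to be confirmed directly from the substitution into (\ref{pnt}).
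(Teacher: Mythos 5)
Your proposal is correct and follows essentially the same route as the paper, which likewise proves the theorem by substituting the expansions (\ref{albe2}) into (\ref{pnt}) and expanding for large $n$; your leading-order balance and the $\kappa^4=10\kappa$ simplification check out. The Toda-relation consistency check via (\ref{dpn}) is a sensible addition but does not change the argument.
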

\begin{proof} Substituting (\ref{albe2}) into (\ref{pnt}) and taking a large $n$ limit, we establish the theorem.
\end{proof}
Finally, we devote ourselves to the derivation of the large $n$ asymptotic expansion of the Hankel determinant $D_n(t)$ for fixed $t\in \mathbb{R}$.
By using a recent result of Charlier and Gharakhloo {\cite{CG2021}}, we first show the large $n$ asymptotic expansion (without higher order terms) of $D_n(0)$ in the following lemma.
\begin{lemma}
The Hankel determinant $D_n(0)$ has the large $n$ asymptotics
\begin{align}\label{logD0}
\ln D_n(0)=&\:\frac{1}{3}n^2\ln n-\left(\frac{1}{2}+\frac{\ln10}{3}\right)n^2+\frac{\lambda}{3}n\ln n+\left(\ln(2\pi)-\frac{\lambda}{3}(1+\ln 10)\right)n+\frac{3\lambda^2-1}{6}\ln n\nonumber\\[8pt]
&+2\zeta'(-1)-\ln G(\la+1)+\frac{\la}{2}\ln(2\pi)-\frac{\ln 3}{24}-\frac{4\la^2-1}{8}\ln\frac{5}{3}+ o(1),
\end{align}
where $\zeta(\cdot)$ is the Riemann zeta function and $G(\cdot)$ is the Barnes $G$-function.
\end{lemma}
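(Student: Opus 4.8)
The plan is to reduce $D_n(0)$, the Hankel determinant for the \emph{fixed} weight $w(x)=x^{\la}\mathrm{e}^{-\frac13x^3}$ on $\mathbb{R}^+$, to a Hankel determinant with an $n$-dependent Laguerre-type potential $y^{\la}\mathrm{e}^{-nV(y)}$, to which the result of Charlier and Gharakhloo \cite{CG2021} applies directly. First I would rescale $x=s\,y$ with $s=c\,n^{1/3}$, the natural scale dictated by the supplementary conditions (\ref{sc1})--(\ref{sc2}) and the expansions (\ref{two}); at $t=0$ these give soft edge $b\sim 4n^{1/3}/\ka$, so the correct choice is $c=4/\ka$ (equivalently $c^3=32/5$). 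Writing the moments of the rescaled weight as $\mu_{i+j}=s^{\,i+j+\la+1}\tilde\mu_{i+j}$ and pulling the powers of $s$ out of the determinant yields
\[
D_n(0)=s^{\,n^2+\la n}\,\tilde D_n,\qquad \tilde D_n:=\det\big(\tilde\mu_{i+j}\big)_{i,j=0}^{n-1},
\]
where $\tilde D_n$ is generated by $y^{\la}\mathrm{e}^{-nV(y)}$ with $V(y)=\frac{c^3}{3}y^3=\frac{32}{15}y^3$. Taking logarithms, the prefactor contributes $(n^2+\la n)\big(\ln c+\tfrac13\ln n\big)$; this alone accounts for the $\frac13n^2\ln n$ and $\frac{\la}{3}n\ln n$ terms and for the $\ln c$ parts of the $n^2$- and $n$-coefficients, and it produces \emph{no} bare constant and no bare $\ln n$ term, so the $o(1)$ constant of $\ln D_n(0)$ coincides exactly with that of $\ln\tilde D_n$.

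Second, with this choice of $c$ the equilibrium measure of $V$ on $\mathbb{R}^+$ is supported on $[0,1]$ with a hard edge at the origin, which places $\tilde D_n$ squarely in the one-cut, hard-edge setting of \cite{CG2021}: soft edge at $1$, hard edge at $0$ carrying the root-type exponent $\la$, and no further Fisher--Hartwig singularities. Applying their theorem gives an expansion
\[
\ln\tilde D_n=\tilde c_2\,n^2+\tilde c_1\,n+\tfrac{3\la^2-1}{6}\ln n+\tilde c_0+o(1),
\]
in which $\tilde c_2,\tilde c_1$ are the equilibrium-energy and Robin-constant contributions of the cubic potential, the coefficient $\frac{3\la^2-1}{6}$ is the combined hard-edge/genus contribution, and the constant $\tilde c_0$ carries the universal Barnes-type pieces $2\zeta'(-1)-\ln G(\la+1)+\frac{\la}{2}\ln(2\pi)$ (the $2\zeta'(-1)$ and the double $\frac1{12}\ln n$ cancellation arising from the two Glaisher--Kinkelin factors implicit in the hard edge) together with the potential-dependent constants.

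Finally I would add the two contributions, substitute $c=4/\ka$ and $\ka=\sqrt[3]{10}$, and simplify the resulting combinations of $\ln c=\tfrac13(5\ln2-\ln5)$ with $\ln10=\ln2+\ln5$. The $n^2$- and $n$-coefficients should collapse to $-\big(\tfrac12+\tfrac{\ln10}{3}\big)$ and $\ln(2\pi)-\tfrac{\la}{3}(1+\ln10)$, while the potential-dependent part of $\tilde c_0$ reorganizes into $-\frac{\ln3}{24}-\frac{4\la^2-1}{8}\ln\frac53$, giving (\ref{logD0}). The main obstacle is purely the bookkeeping of constants: Charlier and Gharakhloo's formula expresses $\tilde c_1$ and $\tilde c_0$ through several potential-theoretic quantities (the endpoint, the Robin constant, and an integral of $V$ against the equilibrium density), and the delicate step is to evaluate these for $V(y)=\frac{32}{15}y^3$ on $[0,1]$ and to verify that, after combining with the rescaling term $(n^2+\la n)\ln c$, every spurious logarithm cancels so that precisely $\frac{\ln3}{24}$ and $\frac{4\la^2-1}{8}\ln\frac53$ survive in the $o(1)$ constant.
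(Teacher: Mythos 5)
Your proposal is correct and follows essentially the same route as the paper: rescale $x$ by the soft-edge scale $\sim n^{1/3}$ to turn the fixed cubic weight into an $n$-dependent Laguerre-type potential with a hard edge carrying the exponent $\lambda$, pull the power of the scale out of the determinant, and apply Theorem~1.2 of Charlier--Gharakhloo to the remaining determinant before recombining constants. The only (cosmetic) difference is that the paper uses the affine substitution $x_k=\sqrt[3]{4n/5}\,(y_k+1)$ so that the equilibrium support is the interval $[-1,1]$ required by the normalization of the Charlier--Gharakhloo theorem (hard edge at $-1$, soft edge at $1$, $V(x)=\tfrac{4}{15}(x+1)^3$), whereas your pure dilation onto $[0,1]$ would need one further trivial shift before their formula applies verbatim.
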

\begin{proof}
Recall that
\begin{align}
D_n(0)&=\det\left(\int_0^{\infty}x^{i+j}x^{\lambda}\mathrm{e}^{-\frac{1}{3}x^3}dx\right)_{i,j=0}^{n-1}\nonumber\\
&=\frac{1}{n !}\int_{(0,\infty)^n}\prod_{1\le i<j\le n}(x_i-x_j)^2\prod_{k=1}^nx_k^{\lambda}\mathrm{e}^{-\frac{1}{3}x_k^3}dx_k.\no
\end{align}
In order to use the result in {\cite{CG2021}}, we make the change of variables $x_k=\sqrt[3]{\frac{4n}{5}}\:(y_k+1),\; k=1,2,\dots,n$ to obtain
\begin{align}\label{Dn0}
D_n(0)&=\left(\frac{4n}{5}\right)^{\frac{n(n+\lambda)}{3}}\frac{1}{n !}\int_{(-1,\infty)^n}\prod_{1\le i<j\le n}(y_i-y_j)^2\prod_{k=1}^n(y_k+1)^{\lambda}\mathrm{e}^{-\frac{4}{15}n(y_k+1)^3}dy_k\nonumber\\
&=\left(\frac{4n}{5}\right)^{\frac{n(n+\lambda)}{3}}\det\left(\int_{-1}^{\infty}x^{i+j}(x+1)^{\lambda}\mathrm{e}^{-\frac{4}{15}n(x+1)^3}dx\right)_{i,j=0}^{n-1}.
\end{align}
The Hankel determinant in (\ref{Dn0}) is a special case of \cite[Theorem 1.2]{CG2021} by taking
$$
V(x)=\frac{4}{15}(x+1)^3,\qquad W(x)=0,
$$
$$
\al_0=\la,\quad \al_1=\al_2=\cdots=\al_m=0,\qquad \bt_1=\bt_2=\cdots=\bt_m=0,
$$
$$
(\mathrm{do}\; \mathrm{not}\; \mathrm{confuse}\; \mathrm{these}\; \mathrm{notations}\; \mathrm{with}\; \mathrm{the}\; \mathrm{recurrence}\; \mathrm{coefficients})
$$
and
\begin{align}\label{psix}
\psi(x)&=\frac{1}{2\pi^2}P\int_{-1}^1\frac{V'(y)}{y-x}\sqrt{\frac{1+y}{1-y}}dy\no\\
&=\frac{2x^2+6x+7}{5\pi}.\nonumber
\end{align}
In this case, $\psi(x)>0$ for all $x\in[-1,1]$ and satisfies the normalization condition $\int_{-1}^{1}\psi(x)\sqrt{\frac{1-x}{1+x}}=1$.
Then we obtain from \cite[Theorem 1.2]{CG2021} that
\be\label{det}
\det\left(\int_{-1}^{\infty}x^{i+j}(x+1)^{\lambda}\mathrm{e}^{-\frac{4}{15}n(x+1)^3}dx\right)_{i,j=0}^{n-1}=\exp\left(C_1n^2+C_2n+C_3\ln n+C_4+o(1)\right),
\ee
where
\begin{align}
&C_1=-\frac{1}{2}-\ln 2,\qquad\qquad C_2=\ln(2\pi)-\lambda\left(\frac{1}{3}+\ln2\right),\qquad\qquad C_3=\frac{\lambda^2}{2}-\frac{1}{6},\nonumber\\
&C_4=2\zeta'(-1)-\ln G(\la+1)+\frac{\la}{2}\ln(2\pi)-\frac{\ln 3}{24}-\frac{4\la^2-1}{8}\ln\frac{5}{3}.\nonumber
\end{align}
The combination of (\ref{Dn0}) and (\ref{det}) gives the desired result.
\end{proof}
\begin{theorem}
For fixed $t\in \mathbb{R}$, the Hankel determinant $D_n(t)$ has the large $n$ asymptotic expansion
\begin{align}\label{dnta}
\ln D_n(t)=&\:\frac{1}{3}n^2\ln n-\left(\frac{1}{2}+\frac{\ln10}{3}\right)n^2+\frac{3tn^{4/3}}{2\kappa}+\frac{\lambda}{3}n\ln n+\left(\ln(2\pi)-\frac{\lambda}{3}(1+\ln10)\right)n\nonumber\\[8pt]
&+\frac{t^2n^{2/3}}{2\kappa^2}+\frac{\lambda t n^{1/3}}{\kappa}+\frac{3\lambda^2-1}{6}\ln n+c_0+\frac{\lambda t^2}{6\kappa^2n^{1/3}}+\frac{t(t^3-360\lambda^2 +60)}{1080\kappa n^{2/3}}\nonumber\\[8pt]
&+\frac{\lambda(8\lambda^2-3)}{36n}-\frac{t^2(t^3+1800\la^2-450)}{8100\kappa^2n^{4/3}}-\frac{\lambda t(t^3-420\lambda^2 +60)}{3240\kappa n^{5/3}}+O(n^{-2}),
\end{align}
where $\kappa=\sqrt[3]{10}$ and the constant term $c_0$ is given by
$$
c_0=\frac{t^3}{90}+2\zeta'(-1)-\ln G(\la+1)+\frac{\la}{2}\ln(2\pi)-\frac{\ln 3}{24}-\frac{4\la^2-1}{8}\ln\frac{5}{3}.
$$
\end{theorem}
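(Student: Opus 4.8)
The plan is to read the identity $H_n(t)=\frac{d}{dt}\ln D_n(t)=-{\rm\bf p}(n,t)$, obtained from \eqref{Hntd} and \eqref{Hnt1}, as a first-order ODE in $t$ and integrate it from the base point $t=0$:
\be
\ln D_n(t)=\ln D_n(0)-\int_0^t{\rm\bf p}(n,s)\,ds.\no
\ee
Substituting the large $n$ expansion \eqref{apn} of ${\rm\bf p}(n,s)$, with $t$ replaced by the integration variable $s$, and integrating term by term produces the entire $t$-dependence of $\ln D_n(t)$, since every resulting term vanishes at $s=0$. A direct check matches these against the $t$-dependent terms of \eqref{dnta}: the leading $-\frac{3n^{4/3}}{2\kappa}$ integrates to $\frac{3tn^{4/3}}{2\kappa}$, the $n^{2/3}$ and $n^{1/3}$ terms to $\frac{t^2n^{2/3}}{2\kappa^2}$ and $\frac{\lambda tn^{1/3}}{\kappa}$, and the $n$-independent $-\frac{s^2}{30}$ to the $\frac{t^3}{90}$ sitting inside $c_0$; the remaining lower-order monomials in $t$ fall into place the same way. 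The $t$-independent remainder is $\ln D_n(0)$, for which I would quote the preceding Lemma \eqref{logD0}: it supplies exactly the terms of \eqref{dnta} from $\frac13 n^2\ln n$ down to the constant, the non-$\frac{t^3}{90}$ part of $c_0$ being precisely the constant appearing in \eqref{logD0}.

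This settles \eqref{dnta} at order $n^0$ and above together with every lower-order $t$-dependent term, but it leaves open the one genuinely $t$-independent lower-order term, $\frac{\lambda(8\lambda^2-3)}{36n}$: the Lemma is accurate only to $o(1)$, and since \eqref{apn} has no $n^{-1}$ term the integral contributes nothing at this order, so the coefficient is $t$-independent and cannot come from either step. To pin it down I would impose the consistency relation \eqref{betadn} at $t=0$,
\be
\ln\beta_n(0)=\ln D_{n+1}(0)-2\ln D_n(0)+\ln D_{n-1}(0),\no
\ee
viewing the right-hand side as a second difference of the expansion of $\ln D_n(0)$ carrying an undetermined coefficient $d_1$ at order $n^{-1}$. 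Using $f(n\pm1)=\sum_j(\pm1)^jf^{(j)}(n)/j!$, the second difference equals $f''(n)+\frac1{12}f^{(4)}(n)+\cdots$; at order $n^{-3}$ the only survivors are the $\frac{\lambda}{3}n\ln n$ term through its $f^{(4)}$ (giving $\frac{\lambda}{18}$) and the $d_1/n$ term (giving $2d_1$). Expanding $\ln\beta_n(0)$ from \eqref{eqn:4.18b} to order $n^{-3}$ produces the coefficient $\frac{\lambda(4\lambda^2-1)}{9}$ there, and matching forces $d_1=\frac{\lambda(8\lambda^2-3)}{36}$, as claimed.

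The main obstacle is twofold. Analytically, one must justify interchanging the $t$-integration with the large $n$ asymptotics, that is, verify that \eqref{apn} holds uniformly for $s$ in compact $t$-intervals; without this, the term-by-term integration in the first step is only formal. The more delicate bookkeeping step is the extraction of the $n^{-1}$ coefficient: the second difference annihilates the affine-in-$n$ part of $\ln D_n(0)$ and pushes every decaying term down by two orders, while the $n^2\ln n$ and $\ln n$ terms feed orders $n^{-2}$ and $n^{-4}$ but skip $n^{-3}$, leaving the $n\ln n$ contribution as the single competitor to $d_1/n$. The whole identity therefore hinges on these two contributions combining to the exact value $\frac{\lambda(8\lambda^2-3)}{36}$, and carrying enough terms of \eqref{eqn:4.18b} and of the second-difference expansion to see this cleanly is the crux of the argument.
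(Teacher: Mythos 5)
Your proposal is correct and is built from exactly the same three ingredients as the paper's proof --- the identity $\frac{d}{dt}\ln D_n(t)=-{\rm\bf p}(n,t)$ integrated from $t=0$ using \eqref{apn}, the Lemma \eqref{logD0} for $\ln D_n(0)$, and the second-difference relation \eqref{rbd} fed with the expansion \eqref{eqn:4.18b} of $\beta_n$ --- but you assemble them in the reverse order. The paper first posits the expansion form \eqref{dnt} (motivated by the Coulomb-fluid free energy), determines every coefficient \emph{except} the affine part $c_3n+c_0$ from \eqref{rbd} (the discrete second difference annihilates precisely the affine functions of $n$), and only then invokes the integration identity together with the Lemma to fix $c_3$ and $c_0$. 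You instead obtain all $t$-dependent terms and everything down to the constant from the integration step and the Lemma, and use \eqref{rbd} at $t=0$ only to recover the $t$-independent sub-constant coefficients; your extraction of the $n^{-1}$ coefficient is correct, since $f(n+1)+f(n-1)-2f(n)=f''(n)+\tfrac{1}{12}f^{(4)}(n)+\cdots$ applied to $\tfrac{\lambda}{3}n\ln n$ and to $d_1/n$ gives $\tfrac{\lambda}{18}n^{-3}+2d_1n^{-3}$, which matched against the $n^{-3}$ coefficient $\tfrac{\lambda(4\lambda^2-1)}{9}$ of $\ln\beta_n(0)$ yields $d_1=\tfrac{\lambda(8\lambda^2-3)}{36}$. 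One point you should make explicit: a priori there could also be $t$-independent coefficients at orders $n^{-1/3}$, $n^{-2/3}$, $n^{-4/3}$, $n^{-5/3}$ (the theorem asserts they vanish), and your method must rule them out by the same $t=0$ second-difference argument --- for $n^{-1/3}$ and $n^{-2/3}$ this is immediate because $\beta_n(0)$ expands in integer powers of $n^{-1}$ relative to $n^{2/3}/\kappa^2$, so $\ln\beta_n(0)$ has no $n^{-7/3}$ or $n^{-8/3}$ terms, while for $n^{-4/3}$ and $n^{-5/3}$ one needs $\beta_n(0)$ beyond the displayed $O(n^{-3})$, exactly as the paper implicitly does. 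The uniformity-in-$s$ issue you raise for the term-by-term integration is genuine but is not addressed in the paper either; at that point both derivations are formal, and relative to the paper's own standard of rigor your argument is complete.
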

\begin{proof}
From (\ref{fn}) and (\ref{Aas}), we find that the free energy $F[\sigma]$ has the large $n$ asymptotic expansion
\begin{align}
F[\sigma]=&-\frac{1}{3}n^2\ln n-\frac{\lambda}{3}n\ln n-\frac{\lambda^2}{2}\ln n+\left(\frac{1}{2}+\frac{\ln10}{3}\right)n^2-\frac{3tn^{4/3}}{2\kappa}+\frac{\lambda}{3}(1+\ln 10)n\nonumber\\[8pt]
&-\frac{t^2n^{2/3}}{2\kappa^2}-\frac{\lambda t n^{1/3}}{\kappa}+C-\frac{\lambda t^2}{6\kappa^2n^{1/3}}-\frac{t(t^3-360\lambda^2)}{1080\kappa n^{2/3}}-\frac{2\lambda^3}{9n}+\frac{t^2(t^3+1800\lambda^2)}{8100\kappa^2n^{4/3}}\nonumber\\[8pt]
&+\frac{\lambda t(t^3-420\lambda^2)}{3240\kappa n^{5/3}}+O(n^{-2}),\no
\end{align}
where $C$ is an integration constant.
It was pointed out in \cite{ChenIsmail,ChenIsmail1998} that the ``free energy" $F_n(t):=-\ln D_n(t)$ is approximated by the free energy $F[\s]$ for sufficiently large $n$ and the approximation is very accurate and effective. Hence, we have the large $n$ expansion form of $\ln D_n(t)$ as follows:
\be\label{dnt}
\ln D_n(t)=c_9n^2\ln n+c_8n\ln n+c_7\ln n+\sum_{j=-\infty}^{6}c_jn^{j/3}.
\ee
From (\ref{betadn}) we have
\be\label{rbd}
\ln \bt_n(t)=\ln D_{n+1}(t)+\ln D_{n-1}(t)-2\ln D_n(t).
\ee
Substituting (\ref{eqn:4.18b}) and (\ref{dnt}) into (\ref{rbd}) and taking the large $n$ asymptotic expansions, we obtain the coefficients $c_j$ (except $c_3$ and $c_0$) by comparing powers of $n$ on both sides. It follows that
\begin{align}\label{llDnt}
\ln D_n(t)=&\:\frac{1}{3}n^2\ln n+\frac{\lambda}{3}n\ln n+\frac{3\la^2-1}{6}\ln n-\left(\frac{1}{2}+\frac{\ln 10}{3}\right)n^2+\frac{3tn^{4/3}}{2\kappa}+c_3n\nonumber\\[8pt]
&+\frac{t^2n^{2/3}}{2\kappa^2}+\frac{\lambda t n^{1/3}}{\kappa}+c_0+\frac{\lambda t^2}{6\kappa^2n^{1/3}}+\frac{t(t^3-360\lambda^2 +60)}{1080\kappa n^{2/3}}+\frac{\lambda(8\lambda^2-3)}{36n}\nonumber\\[8pt]
&-\frac{t^2(t^3+1800\la^2-450)}{8100\kappa^2n^{4/3}}-\frac{\lambda t(t^3-420\lambda^2 +60)}{3240\kappa n^{5/3}}+O(n^{-2}).
\end{align}

Next, we devote ourselves to determining the two constants $c_3$ and $c_0$.
From (\ref{Hntd}) and (\ref{Hnt1}) we have
$$
\ln\frac{D_n(t)}{D_n(0)}=\int_0^t H_n(s)ds=-\int_0^t {\rm\bf p}(n,s)ds.
$$
By making use of (\ref{apn}) yields
\begin{align}\label{lnDn}
\ln\frac{D_n(t)}{D_n(0)}=&\frac{3tn^{4/3}}{2\kappa}+\frac{t^2n^{2/3}}{2\kappa^2}+\frac{\lambda t n^{1/3}}{\kappa}+\frac{t^3}{90}+\frac{\lambda t^2}{6\kappa^2n^{1/3}}+\frac{t(t^3-360\lambda^2 +60)}{1080\kappa n^{2/3}}\nonumber\\[8pt]
&-\frac{t^2(t^3+1800\la^2-450)}{8100\kappa^2n^{4/3}}-\frac{\lambda t(t^3-420\lambda^2 +60)}{3240\kappa n^{5/3}}+O(n^{-2}).
\end{align}
The sum of (\ref{logD0}) and (\ref{lnDn}) gives
\begin{align}\label{dnt2}
\ln D_n(t)=&\:\frac{1}{3}n^2\ln n-\left(\frac{1}{2}+\frac{\ln10}{3}\right)n^2+\frac{3tn^{4/3}}{2\kappa}+\frac{\lambda}{3}n\ln n+\left(\ln(2\pi)-\frac{\lambda}{3}(1+\ln10)\right)n\nonumber\\[8pt]
&+\frac{t^2n^{2/3}}{2\kappa^2}+\frac{\lambda t n^{1/3}}{\kappa}+\frac{3\lambda^2-1}{6}\ln n+\frac{t^3}{90}+2\zeta'(-1)-\ln G(\la+1)+\frac{\la}{2}\ln(2\pi)\no\\[8pt]
&-\frac{\ln 3}{24}-\frac{4\la^2-1}{8}\ln\frac{5}{3}+ o(1).
\end{align}
Comparing (\ref{llDnt}) and (\ref{dnt2}), we have
\begin{align}
&c_3=\ln(2\pi)-\frac{\lambda}{3}(1+\ln 10),\nonumber\\
&c_0=\frac{t^3}{90}+2\zeta'(-1)-\ln G(\la+1)+\frac{\la}{2}\ln(2\pi)-\frac{\ln 3}{24}-\frac{4\la^2-1}{8}\ln\frac{5}{3}.\nonumber
\end{align}
This completes the proof.
\end{proof}
\begin{corollary}
For fixed $t\in \mathbb{R}$, the normalized constant $h_n(t)$ has the large $n$ asymptotic expansion
\begin{align}
\ln h_n(t)=&\:\frac{2}{3}n\ln n-\frac{2}{3}(1+\ln 10)n+\frac{2tn^{1/3}}{\ka}+\frac{\la+1}{3}\ln n+\ln(2\pi)-(\la+1)\ln\ka+\frac{t^2}{3\ka^2n^{1/3}}\no\\[8pt]
&+\frac{(\la+1)t}{3\ka n^{2/3}}+\frac{9\la^2+3\la-1}{18n}-\frac{(\la+1)t^2}{18\ka^2n^{4/3}}-\frac{t^4-180(\la-1)(2\la+1)t}{1620\ka n^{5/3}}+O(n^{-2}).\no
\end{align}
\end{corollary}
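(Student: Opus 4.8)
The plan is to read off $\ln h_n(t)$ directly from the large $n$ expansion of the Hankel determinant just established in (\ref{dnta}). The relation (\ref{dnt1}) gives $D_{n+1}(t)=D_n(t)\,h_n(t)$, so that
$$
\ln h_n(t)=\ln D_{n+1}(t)-\ln D_n(t).
$$
First I would substitute the expansion (\ref{dnta}) for both $\ln D_{n+1}(t)$ and $\ln D_n(t)$ and expand the resulting difference in powers of $n$. For each monomial appearing in (\ref{dnta}), of the form $c\,n^{a}$ or $c\,n^{a}\ln n$, this amounts to computing $f(n+1)-f(n)$ by means of the Taylor and binomial expansions $\ln(n+1)=\ln n+\tfrac1n-\tfrac1{2n^2}+\cdots$ and $(1+1/n)^{a}=1+\tfrac{a}{n}+\binom{a}{2}\tfrac1{n^2}+\cdots$, and then collecting all contributions order by order.

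The one structural point to keep in mind is that differencing lowers each order by one, so a single displayed order of $\ln h_n(t)$ receives contributions from several orders of (\ref{dnta}): a term $c\,n^{a}$ of the Hankel expansion feeds into the orders $n^{a-1},\,n^{a-2},\dots$ of $\ln h_n(t)$ through the successive terms $f'(n),\tfrac12 f''(n),\dots$ of the difference, and a term $c\,\ln n$ already contributes at order $n^{-1}$. For instance, the displayed coefficient at order $n^{-1}$ collects the $\tfrac1n$ contributions of the leading $\tfrac13 n^2\ln n$, the $\tfrac{\lambda}{3}n\ln n$ and the $\tfrac{3\lambda^2-1}{6}\ln n$ terms of (\ref{dnta}), which combine to $\tfrac{9\lambda^2+3\lambda-1}{18}$, exactly as claimed. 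Since (\ref{dnta}) is known with remainder $O(n^{-2})$ and the difference of two such remainders is again $O(n^{-2})$, all displayed coefficients down to order $n^{-5/3}$ are determined and the error is $O(n^{-2})$, matching the stated expansion.

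As an independent check one can proceed differently: integrating (\ref{hna}), namely $\frac{d}{dt}\ln h_n(t)=\alpha_n$, from $0$ to $t$ and inserting the expansion (\ref{eqn:4.18a}) of $\alpha_n$ reproduces the entire $t$-dependent part of $\ln h_n(t)$, while the $t$-independent part is fixed by $\ln h_n(0)=\ln D_{n+1}(0)-\ln D_n(0)$ through the expansion (\ref{logD0}). I expect the main obstacle to be purely computational: the careful, consistent bookkeeping of the Taylor and binomial expansions so that every contribution to each order through $n^{-5/3}$ is captured, with no order-$n^{-2}$ term inadvertently promoted into the displayed part.
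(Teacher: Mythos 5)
Your proposal is correct and is exactly the paper's own argument: the paper likewise derives the result from $\ln h_n(t)=\ln D_{n+1}(t)-\ln D_n(t)$ (equation (\ref{hd})) by substituting the expansion (\ref{dnta}) and expanding the difference for large $n$. Your spot-check of the $n^{-1}$ coefficient, $\tfrac19+\tfrac{\lambda}{6}+\tfrac{3\lambda^2-1}{6}=\tfrac{9\lambda^2+3\lambda-1}{18}$, is accurate, and the bookkeeping caveat you raise is the only real issue in carrying out the computation.
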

\begin{proof}
From (\ref{dnt1}) we have
\be\label{hd}
\ln h_n(t)=\ln D_{n+1}(t)-\ln D_n(t).
\ee
Substituting (\ref{dnta}) into (\ref{hd}), we obtain the desired result by taking a large $n$ limit.
\end{proof}

\section{Long-time asymptotics}
In this section, we consider the asymptotics of our problem as $t\rightarrow\pm\infty$ for fixed $n\in\mathbb{N}$.
From the asymptotics of $\al_0(t)$ and $\bt_0(t)$ as $t\rightarrow\pm\infty$ and making use of the Toda system (\ref{ts}), Clarkson and Jordaan \cite{Clarkson4} obtained the long-time asymptotics of the recurrence coefficients in the following theorem.
\begin{theorem}
As $t\to+\infty$, the recurrence coefficients $\alpha_n(t)$ and $\beta_n(t)$ have the asymptotic expansions
\begin{subequations}\label{ab1}
\begin{align}
&\alpha_n(t)=\sqrt t-\frac{2n-2\lambda+1}{4t}-\frac{12 n^2+12n(1-4 \lambda )+12 \lambda ^2-24 \lambda +5}{32t^{5/2}}+O(t^{-4}),\\[8pt]
&\beta_n(t)=\frac{n}{2\sqrt t}+\frac{n(n-2\lambda)}{4t^2}+\frac{5n(4 n^2-24 n\lambda +12 \lambda ^2+1)}{64t^{7/2}}+O(t^{-5}).\label{bt1}
\end{align}
\end{subequations}
As $t\to-\infty$, the recurrence coefficients $\alpha_n(t)$ and $\beta_n(t)$ have the asymptotic expansions
\begin{subequations}\label{ab2}
\begin{align}
&\alpha_n(t)=-\frac{2n+\lambda+1}{t}-\frac{(2n+\lambda+1)\left[10 n^2+10 n(\lambda +1) +(\la+2)(\la+3)\right]}{t^4}+O(t^{-7}),\\[8pt]
&\beta_n(t)=\frac{n(n+\lambda)}{t^2}+\frac{4n(n+\lambda)(5n^2+5n\lambda+\lambda^2+1)}{t^5}+O(t^{-8}).\label{bt2}
\end{align}
\end{subequations}
\end{theorem}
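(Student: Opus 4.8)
The plan is to reduce everything to the single base case $n=0$ and then propagate in $n$ using the Toda system (\ref{ts}). Rewriting its two equations as recursions, the first yields
$$
\beta_{n+1}=\beta_n+\frac{d\alpha_n}{dt},
$$
while the second, using $\beta_n>0$, yields
$$
\alpha_n=\alpha_{n-1}+\frac{d}{dt}\ln\beta_n .
$$
Starting from $\beta_0=0$ together with the asymptotic expansion of $\alpha_0(t)$, these relations determine the expansions of all $\alpha_n(t)$ and $\beta_n(t)$ by induction: from $(\alpha_{n-1},\beta_{n-1})$ one first forms $\beta_n=\beta_{n-1}+\alpha_{n-1}'$ and then $\alpha_n=\alpha_{n-1}+(\ln\beta_n)'$, which closes the loop. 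In particular $\beta_1=\alpha_0'$ seeds the recursion.

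First I would establish the base case, namely the $t\to\pm\infty$ expansion of $\alpha_0(t)=\mu_1(t)/\mu_0(t)$. For $t\to+\infty$ the exponent $-\tfrac13x^3+tx$ has an interior maximum at the saddle $x_0=\sqrt t$, with second derivative $-2\sqrt t$; rescaling $x=\sqrt t+t^{-1/4}s$ and integrating the resulting Gaussian moments term by term produces an expansion of each $\mu_j$ in powers of $t^{-3/2}$ whose $j$-independent prefactor $e^{\frac23t^{3/2}}\sqrt{\pi}\,t^{-1/4}$ cancels in the ratio, the surviving factor $x_0^{\,j+\lambda}=t^{(j+\lambda)/2}$ supplying the leading $\sqrt t$ and giving $\alpha_0=\sqrt t+\tfrac{2\lambda-1}{4}t^{-1}+\cdots$. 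For $t\to-\infty$ the substitution $x=u/(-t)$ turns the weight into the Laguerre-type $u^\lambda e^{-u}$ times $e^{u^3/(3t^3)}$; expanding the last factor and integrating against $e^{-u}$ gives ratios of Gamma functions and an expansion of $\alpha_0$ in powers of $t^{-3}$ beginning with $-(\lambda+1)/t$. This fixes the two distinct expansion scales: powers of $t^{-3/2}$ for $t\to+\infty$ and powers of $t^{-3}$ for $t\to-\infty$.

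Next I would run the induction. The key structural observation is that differentiation and the logarithmic-derivative operation map each ansatz into itself: for $t\to+\infty$, $\alpha_n$ lives in powers $t^{1/2-3k/2}$ and $\beta_n$ in powers $t^{-1/2-3k/2}$, and indeed $\tfrac{d}{dt}t^{1/2-3k/2}\propto t^{-1/2-3k/2}$ is an admissible $\beta$-power, while $\tfrac{d}{dt}\ln\beta_n=-\tfrac1{2t}+\cdots$ supplies admissible $\alpha$-powers; the analogous matching holds for the $t^{-3}$ scale as $t\to-\infty$. Hence the two recursions preserve the claimed forms, and one reads off the coefficient of each power as an explicit function of $n$ and $\lambda$, which I would then compare with the stated formulas (\ref{ab1})--(\ref{ab2}). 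A convenient first consistency check is $\beta_1=\alpha_0'$, which already reproduces the $n=1$ specialization of (\ref{bt1}) and (\ref{bt2}).

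I expect the main obstacle to be the base-case expansion of $\alpha_0$ to sufficiently high order, especially the Laplace expansion as $t\to+\infty$: the non-integer factor $x^\lambda$ and the cubic exponent force one to retain several Gaussian moments and to track the delicate cancellation of the common prefactor in $\mu_1/\mu_0$. A secondary bookkeeping difficulty is verifying that enough terms survive the repeated differentiation in the induction to reach the stated errors $O(t^{-4})$, $O(t^{-5})$ (respectively $O(t^{-7})$, $O(t^{-8})$). As an independent cross-check I would substitute the ansätze directly into the discrete system (\ref{dieq}) and match powers of $t$; there the leading balance admits two branches for the leading coefficient of $\beta_n$ (namely $n/2$ and $(n+\lambda)/2$ as $t\to+\infty$), and the boundary value $\beta_0=0$ is precisely what selects the correct branch $n/2$, confirming that the base-case data is indispensable.
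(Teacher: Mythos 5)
Your proposal follows essentially the same route as the paper, which simply invokes the method of Clarkson and Jordaan: expand $\alpha_0(t)$ (with $\beta_0=0$) as $t\to\pm\infty$ from the moment asymptotics and propagate in $n$ via the Toda system (\ref{ts}). Your base-case computations, the two expansion scales $t^{-3/2}$ and $t^{-3}$, and the branch-selection remark are all consistent with the stated formulas, so the argument is correct.
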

\begin{proof}
See Clarkson and Jordaan \cite[Lemma 3.16]{Clarkson4}. We produce one more term for the asymptotics of each recurrence coefficient in the $t\to+\infty$ case by using their method. More higher order terms can be derived both for the $t\to+\infty$ and $t\to-\infty$ cases, but the expressions are very long.
\end{proof}
Based on the above theorem, we are able to derive the long-time asymptotics of the sub-leading coefficient ${\rm\bf p}(n,t)$, the Hankel determinant $D_n(t)$ and the normalized constant $h_n(t)$.
\begin{theorem}
As $t\to+\infty$, the sub-leading coefficient ${\rm\bf p}(n,t)$ has the asymptotic expansion
\be\label{p1}
{\rm\bf p}(n,t)=-n \sqrt{t}+\frac{n (n-2 \lambda )}{4 t}+\frac{n (4 n^2-24n\lambda +12 \lambda ^2+1)}{32t^{5/2}}+O(t^{-4}).
\ee
As $t\to-\infty$, the sub-leading coefficient ${\rm\bf p}(n,t)$ has the asymptotic expansion
\be\label{p2}
{\rm\bf p}(n,t)=\frac{n (n+\lambda )}{t}+\frac{n (n+\lambda )\left(5 n^2+5n\lambda +\lambda ^2+1\right)}{t^4}+O(t^{-7}).
\ee
\end{theorem}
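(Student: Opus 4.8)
The plan is to obtain both expansions by direct substitution of the long-time asymptotics of the recurrence coefficients, supplied by the preceding theorem, into the closed-form expression
\[
{\rm\bf p}(n,t)=-\beta_n\left(\alpha_{n-1}^2+\alpha_{n-1}\alpha_n+\alpha_n^2+\beta_{n-1}+\beta_n+\beta_{n+1}-t\right)
\]
established in (\ref{pnt}). Since (\ref{ab1}) and (\ref{ab2}) hold for every fixed index, I would first write down the index-shifted versions of these expansions for $\alpha_{n-1},\beta_{n-1}$ and $\beta_{n+1}$ (replacing $n$ by $n\mp1$), then insert all five quantities into the bracket above and expand the resulting product in the appropriate asymptotic scale ($t^{-1/2}$ for $t\to+\infty$ and $t^{-1}$ for $t\to-\infty$).

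The key structural point is the behaviour of the bracket $\Lambda_n:=\alpha_{n-1}^2+\alpha_{n-1}\alpha_n+\alpha_n^2+\beta_{n-1}+\beta_n+\beta_{n+1}-t$ at leading order. As $t\to+\infty$, each $\alpha$ behaves like $\sqrt t$, so $\alpha_{n-1}^2+\alpha_{n-1}\alpha_n+\alpha_n^2=3t+O(t^{-1/2})$ while the $\beta$'s are $O(t^{-1/2})$; the crucial cancellation $3t-t=2t$ leaves $\Lambda_n=2t+O(t^{-1/2})$, and multiplying by $-\beta_n=-\tfrac{n}{2\sqrt t}+\cdots$ reproduces the leading term $-n\sqrt t$ of (\ref{p1}). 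As $t\to-\infty$, the $\alpha$'s are $O(t^{-1})$ and the $\beta$'s are $O(t^{-2})$, so $\Lambda_n$ is dominated by $-t$ and $-\beta_n\Lambda_n=\tfrac{n(n+\lambda)}{t}+\cdots$, the leading term of (\ref{p2}). In each case I would then carry the expansion to the stated order, bearing in mind that the product $-\beta_n\Lambda_n$ mixes scales: for $t\to+\infty$, reaching $O(t^{-4})$ in ${\rm\bf p}(n,t)$ requires $\Lambda_n$ up to $O(t^{-7/2})$ and $\beta_n$ up to $O(t^{-5})$, which is exactly the information contained in (\ref{ab1}); the situation for $t\to-\infty$ is analogous, and the terms recorded in (\ref{ab2}) again suffice.

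The computation is otherwise routine, so the only real difficulty is the bookkeeping: tracking the half-integer powers of $t$ produced by squaring $\sqrt t+O(t^{-1})$, correctly propagating the index shifts $n\mapsto n\pm1$ through the subleading coefficients, and verifying that the partial cancellation at the top order does not silently demand a term beyond those supplied. To guard against arithmetic slips I would cross-check the final answers against the Toda-type identity (\ref{dpn}), namely $\tfrac{d}{dt}{\rm\bf p}(n,t)=-\beta_n$: differentiating (\ref{p1}) and (\ref{p2}) term by term must reproduce $-\beta_n$ in the forms (\ref{bt1}) and (\ref{bt2}), which provides a stringent independent check on every non-constant coefficient and confirms the outcome of the substitution.
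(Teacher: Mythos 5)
Your proposal is correct and follows exactly the paper's route: substitute the long-time expansions (\ref{ab1}) and (\ref{ab2}) (with the appropriate index shifts for $\alpha_{n-1}$, $\beta_{n\pm1}$) into the closed form (\ref{pnt}) and expand, which is all the paper does in one line. Your accounting of the orders needed and the consistency check against $\frac{d}{dt}{\rm\bf p}(n,t)=-\beta_n$ are both sound and go slightly beyond what the paper records.
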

\begin{proof}
Substituting (\ref{ab1}) into (\ref{pnt}) and letting $t\to+\infty$, we obtain (\ref{p1}). Similarly, substituting (\ref{ab2}) into (\ref{pnt}) and letting $t\to-\infty$, we arrive at (\ref{p2}).
\end{proof}
\begin{theorem}
As $t\to+\infty$, the Hankel determinant $D_n(t)$ has the asymptotic expansion
\be\label{d1}
\ln D_n(t)=\frac{2}{3}n t^{3/2}-\frac{n (n-2 \lambda )}{4}  \ln t+\widetilde{C}_1(n)+\frac{n(4 n^2-24n\lambda+12 \lambda ^2+1)}{48t^{3/2}}+O(t^{-3}),
\ee
where the constant term $\widetilde{C}_1(n)$, independent of $t$, is given by
$$
\widetilde{C}_1(n)=\frac{n}{2}\ln\pi-\frac{n(n-1)}{2}\ln2+\ln G(n+1),
$$
and $G(\cdot)$ is the Barnes $G$-function.\\
As $t\to-\infty$, the Hankel determinant $D_n(t)$ has the asymptotic expansion
\be\label{d2}
\ln D_n(t)=-n (n+\lambda ) \ln (-t)+\widetilde{C}_2(n)+\frac{n (n+\lambda ) (5 n^2+5 n\lambda+\lambda ^2+1)}{3 t^3}+O(t^{-6}),
\ee
where the constant term $\widetilde{C}_2(n)$ reads
$$
\widetilde{C}_2(n)=\ln\frac{G(n+1)G(n+\lambda+1)}{G(\lambda+1)}.
$$
\end{theorem}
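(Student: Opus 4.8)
The plan is to determine $\ln D_n(t)$ by integrating the identity $\frac{d}{dt}\ln D_n(t)=H_n(t)=-{\rm\bf p}(n,t)$ from (\ref{Hntd}) and (\ref{Hnt1}), feeding in the long-time asymptotics of ${\rm\bf p}(n,t)$ already established in (\ref{p1}) and (\ref{p2}). For $t\to+\infty$, substituting (\ref{p1}) gives $H_n(t)=n\sqrt{t}-\frac{n(n-2\lambda)}{4t}-\frac{n(4n^2-24n\lambda+12\lambda^2+1)}{32t^{5/2}}+O(t^{-4})$, and integrating term by term reproduces exactly the $t$-dependent part of (\ref{d1}): the power $\frac{2}{3}nt^{3/2}$, the logarithm $-\frac{n(n-2\lambda)}{4}\ln t$, and the decaying tail $\frac{n(4n^2-24n\lambda+12\lambda^2+1)}{48t^{3/2}}$, with the $O(t^{-4})$ remainder integrating to the claimed $O(t^{-3})$. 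Analogously, integrating (\ref{p2}) as $t\to-\infty$ (using $\ln(-t)$ as the antiderivative of $1/t$ since $t<0$) yields the terms $-n(n+\lambda)\ln(-t)$ and $\frac{n(n+\lambda)(5n^2+5n\lambda+\lambda^2+1)}{3t^3}$ of (\ref{d2}), with error $O(t^{-6})$. Thus integration fixes every term \emph{except} the $t$-independent constant of integration, which is precisely $\widetilde{C}_1(n)$ (respectively $\widetilde{C}_2(n)$).

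The remaining, and genuinely harder, task is to pin down these constants, which demands an independent leading-order evaluation of $D_n(t)$ in each limit; I would obtain them by reducing the ensemble to a classical one. For $t\to-\infty$, write $t=-s$ with $s\to+\infty$ and rescale $x=u/s$ in the moments $\mu_j(-s)=\int_0^\infty x^{j+\lambda}e^{-x^3/3-sx}dx$; the cubic term becomes negligible and $\mu_j(-s)\sim s^{-(j+\lambda+1)}\Gamma(j+\lambda+1)$, so after extracting the powers of $s$ from the rows, columns, and entries of the Hankel determinant one is left with the Laguerre Hankel determinant $\det(\Gamma(i+j+\lambda+1))_{i,j=0}^{n-1}=\prod_{j=0}^{n-1}j!\,\Gamma(j+\lambda+1)$. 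For $t\to+\infty$, the weight concentrates at the saddle $x=\sqrt{t}$ of $-x^3/3+tx$; rescaling $x=\sqrt{t}+\xi\,t^{-1/4}$ in the partition function $Z_n(t)=n!\,D_n(t)$ turns the factor $\prod e^{-x_k^3/3+tx_k}$ into $e^{\frac{2}{3}nt^{3/2}}\prod e^{-\xi_k^2}$, while the Vandermonde, the weight power $\prod x_k^\lambda$, and the Jacobian together contribute an overall factor $t^{-n(n-2\lambda)/4}$, reducing the integral to the Gaussian (Hermite) Mehta integral $\int_{\mathbb R^n}\prod_{i<j}(\xi_i-\xi_j)^2\prod_k e^{-\xi_k^2}d\xi_k=n!\,\pi^{n/2}\,2^{-n(n-1)/2}\,G(n+1)$.

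Finally, I would rewrite both classical evaluations in terms of the Barnes $G$-function through $G(n+1)=\prod_{j=0}^{n-1}j!$ and the telescoping identity $\prod_{j=0}^{n-1}\Gamma(j+\lambda+1)=G(n+\lambda+1)/G(\lambda+1)$, which yields $\widetilde{C}_2(n)=\ln\frac{G(n+1)G(n+\lambda+1)}{G(\lambda+1)}$ directly, and $\widetilde{C}_1(n)=\frac{n}{2}\ln\pi-\frac{n(n-1)}{2}\ln 2+\ln G(n+1)$ from the Hermite normalization constants $h_j=\sqrt{\pi}\,j!/2^j$. Matching these constant terms against the integrated expansions then completes both cases. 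I expect the main obstacle to be the saddle-point and concentration analysis needed to justify that the rescaled integrals converge to the stated classical ensembles with errors of order $o(1)$; once that leading-order reduction is under control, the Barnes $G$-function bookkeeping and the term-by-term integration are routine.
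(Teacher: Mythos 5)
Your proposal is correct, and the first half --- integrating $\tfrac{d}{dt}\ln D_n(t)=-{\rm\bf p}(n,t)$ term by term using (\ref{p1}) and (\ref{p2}) to recover all $t$-dependent terms of (\ref{d1}) and (\ref{d2}) up to a constant of integration --- is exactly what the paper does. Where you genuinely diverge is in pinning down $\widetilde{C}_1(n)$ and $\widetilde{C}_2(n)$. The paper never touches the $n$-fold integral: it substitutes the expansions of $\ln\beta_n(t)$ and $\ln D_n(t)$ into the identity $\ln\beta_n=\ln D_{n+1}+\ln D_{n-1}-2\ln D_n$ from (\ref{rbd}) and compares constant terms, obtaining the second-order difference equations $\widetilde{C}_1(n+1)+\widetilde{C}_1(n-1)-2\widetilde{C}_1(n)=\ln\tfrac{n}{2}$ and $\widetilde{C}_2(n+1)+\widetilde{C}_2(n-1)-2\widetilde{C}_2(n)=\ln\big(n(n+\lambda)\big)$, which are then solved uniquely using only $\widetilde{C}_i(0)=0$ and the value of $\widetilde{C}_i(1)$ read off from the known long-time asymptotics of the single moment $\mu_0(t)$ quoted from Clarkson and Jordaan. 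Your route instead evaluates the constants directly for all $n$ by a Laplace/saddle-point reduction of the partition function: rescaling $x=u/s$ as $t=-s\to-\infty$ to land on the Laguerre Hankel determinant $\prod_{j=0}^{n-1}j!\,\Gamma(j+\lambda+1)$, and rescaling $x=\sqrt{t}+\xi t^{-1/4}$ about the saddle of $-x^3/3+tx$ as $t\to+\infty$ to land on the Mehta--Gaussian integral with norms $h_j=\sqrt{\pi}\,j!/2^{j}$; your power counting ($s^{-n(n+\lambda)}$ and $t^{-n(n-2\lambda)/4}$) and Barnes $G$ bookkeeping both check out against the stated formulas. Each approach has a real advantage: the paper's difference-equation argument is essentially algebraic and needs only one-dimensional asymptotics (of $\mu_0$), but it leans on already having the full expansion of $\beta_n(t)$; yours is self-contained and identifies the classical limiting ensembles explicitly, but --- as you anticipate --- the honest work is justifying that the $n$-fold rescaled integral converges to the classical one with $o(1)$ error (controlling the contribution away from the saddle, the cubic correction in the exponent, and the $x^\lambda$ factor near the origin in the $t\to-\infty$ case), which is routine in principle but is precisely the analysis the paper's argument avoids.
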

\begin{proof}
From (\ref{Hntd}) and (\ref{Hnt1}) and in view of (\ref{p1}), we have as $t\to+\infty$
$$
\frac{d}{dt}\ln D_n(t)=n \sqrt{t}-\frac{n (n-2 \lambda )}{4 t}-\frac{n (4 n^2-24n\lambda +12 \lambda ^2+1)}{32t^{5/2}}+O(t^{-4}).
$$
It follows that
\be\label{lnd}
\ln D_n(t)=\frac{2}{3}n t^{3/2}-\frac{n (n-2 \lambda )}{4}  \ln t+\widetilde{C}_1(n)+\frac{n(4 n^2-24n\lambda+12 \lambda ^2+1)}{48t^{3/2}}+O(t^{-3}),
\ee
where $\widetilde{C}_1(n)$ is an integration constant, independent of $t$, to be determined.
From (\ref{bt1}) we find as $t\to+\infty$
\be\label{lnb}
\ln\beta_n(t)=-\frac{1}{2}\ln t+\ln\frac{n}{2}+\frac{n-2\lambda}{2t^{3/2}}+\frac{16n^2-104n\lambda +44\lambda^2+5}{32t^3}+O(t^{-{9/2}}).
\ee
Substituting (\ref{lnb}) and (\ref{lnd}) into (\ref{rbd}), and comparing the constant terms, we obtain
\be\label{de}
\widetilde{C}_1(n+1)+\widetilde{C}_1(n-1)-2\widetilde{C}_1(n)=\ln\frac{n}{2}.
\ee
Since
$$
D_0(t)=1, \qquad\qquad D_1(t)=\mu_0(t),
$$
and it was shown in \cite[p. 15]{Clarkson4} that as $t\to+\infty$
$$
\mu_0(t)=t^{\lambda/2-1/4}\sqrt{\pi}\exp\left(\frac{2}{3}t^{3/2}\right)\left[1+\frac{12\lambda^2-24\lambda+5}{48t^{3/2}}+O(t^{-3})\right],
$$
we have
\be\label{ic}
\widetilde{C}_1(0)=0,\qquad\qquad \widetilde{C}_1(1)=\frac{1}{2}\ln \pi.
\ee
The second-order difference equation (\ref{de}) with the initial conditions (\ref{ic}) has a unique solution given by
$$
\widetilde{C}_1(n)=\frac{n}{2}\ln\pi-\frac{n(n-1)}{2}\ln2+\ln G(n+1).
$$
Hence, we obtain (\ref{d1}).

On the other hand,
from (\ref{Hntd}), (\ref{Hnt1}) and (\ref{p2}) we have as $t\to-\infty$
$$
\frac{d}{dt}\ln D_n(t)=-\frac{n (n+\lambda )}{t}-\frac{n (n+\lambda )\left(5 n^2+5n\lambda +\lambda ^2+1\right)}{t^4}+O(t^{-7}).
$$
It follows that
\be\label{lnd2}
\ln D_n(t)=-n (n+\lambda ) \ln (-t)+\widetilde{C}_2(n)+\frac{n (n+\lambda ) (5 n^2+5 n\lambda+\lambda ^2+1)}{3 t^3}+O(t^{-6}),
\ee
where $\widetilde{C}_2(n)$ is an integration constant to be determined.
From (\ref{bt2}) we find as $t\to-\infty$
\be\label{lnb2}
\ln\beta_n(t)=-2\ln(- t)+\ln (n(n+\lambda))+\frac{4 (5 n^2+5 n\lambda+\lambda ^2+1)}{t^3}+O(t^{-{6}}).
\ee
Substituting (\ref{lnb2}) and (\ref{lnd2}) into (\ref{rbd}), and comparing the constant terms, we obtain
\be\label{de2}
\widetilde{C}_2(n+1)+\widetilde{C}_2(n-1)-2\widetilde{C}_2(n)=\ln (n(n+\lambda)).
\ee
It was shown in \cite[p. 16]{Clarkson4} that as $t\to-\infty$
$$
\mu_0(t)=\frac{\Gamma(\lambda+1)}{(-t)^{\lambda+1}}\left[1+\frac{(\lambda+1)(\lambda+2)(\lambda+3)}{3t^3}+O(t^{-6})\right],
$$
and we have
\be\label{ic2}
\widetilde{C}_2(0)=0,\qquad\qquad \widetilde{C}_2(1)=\ln \Gamma(\la+1).
\ee
The combination of (\ref{de2}) and (\ref{ic2}) shows that
$$
\widetilde{C}_2(n)=\ln\frac{G(n+1)G(n+\lambda+1)}{G(\lambda+1)}.
$$
Hence, we arrive at (\ref{d2}).
\end{proof}

\begin{corollary}
As $t\to+\infty$, the normalized constant $h_n(t)$ has the asymptotic expansion
$$
\ln h_n(t)=\frac{2}{3}t^{3/2}-\frac{1}{4} ( 2 n-2 \lambda+1) \ln t+\widehat{C}_1(n)+\frac{12 n^2+12n(1-4 \lambda )+12 \lambda ^2-24 \lambda +5}{48t^{3/2}}+O(t^{-3}),
$$
where the constant term $\widehat{C}_1(n)$ is given by
$$
\widehat{C}_1(n)=\frac{1}{2}\ln \pi-n\ln 2+\ln \Gamma(n+1).
$$
As $t\to-\infty$, the normalized constant $h_n(t)$ has the asymptotic expansion
$$
\ln h_n(t)=-(2 n+\lambda +1) \ln (-t)+\widehat{C}_2(n)+\frac{(2 n+\lambda +1)\left[10 n^2+10n( \lambda +1)+(\la+2)(\la+3)\right]}{3t^{3}}+O(t^{-6}),
$$
where the constant term $\widehat{C}_2(n)$ reads
$$
\widehat{C}_2(n)=\ln \left(\Gamma(n+1)\Gamma(n+\la+1)\right).
$$
\end{corollary}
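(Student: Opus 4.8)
The plan is to exploit the product formula (\ref{hd}), namely $\ln h_n(t)=\ln D_{n+1}(t)-\ln D_n(t)$, and to insert directly the long-time expansions of the Hankel determinant just established in (\ref{d1}) and (\ref{d2}). Since every term appearing there is an explicit function of $n$ (a polynomial in $n$, or a Barnes-$G$ constant), forming the difference of the expansion at $n+1$ and at $n$ reduces the whole statement to elementary bookkeeping, term by term, in each of the two regimes. A useful consistency check along the way is the relation $\frac{d}{dt}\ln h_n(t)=\alpha_n$ in (\ref{hna}), whose integration should reproduce the same $t$-dependence.

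For the case $t\to+\infty$, I would first verify that the $t^{3/2}$ term of (\ref{d1}) contributes $\frac{2}{3}[(n+1)-n]\,t^{3/2}=\frac{2}{3}t^{3/2}$, so the leading growth becomes independent of $n$, as asserted. Next the coefficient of $\ln t$ telescopes: $-\frac{1}{4}\big[(n+1)(n+1-2\lambda)-n(n-2\lambda)\big]=-\frac{1}{4}(2n-2\lambda+1)$. The constant term is $\widehat{C}_1(n)=\widetilde{C}_1(n+1)-\widetilde{C}_1(n)$, and here the only nontrivial input is the Barnes-$G$ functional equation $G(z+1)=\Gamma(z)G(z)$, which converts $\ln\frac{G(n+2)}{G(n+1)}$ into $\ln\Gamma(n+1)$, while the $\ln\pi$ and $\ln 2$ pieces combine to $\frac{1}{2}\ln\pi-n\ln2$, giving $\widehat{C}_1(n)=\frac{1}{2}\ln\pi-n\ln2+\ln\Gamma(n+1)$. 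Finally the $t^{-3/2}$ correction is obtained by differencing the cubic $n(4n^2-24n\lambda+12\lambda^2+1)$, which yields $12n^2+12n(1-4\lambda)+12\lambda^2-24\lambda+5$; all of this matches the stated expansion, with the error term $O(t^{-3})$ passing through the subtraction unchanged.

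For the case $t\to-\infty$ I would proceed identically using (\ref{d2}): the coefficient of $\ln(-t)$ telescopes to $-\big[(n+1)(n+1+\lambda)-n(n+\lambda)\big]=-(2n+\lambda+1)$; the constant term $\widehat{C}_2(n)=\widetilde{C}_2(n+1)-\widetilde{C}_2(n)$ collapses to $\ln\big(\Gamma(n+1)\Gamma(n+\lambda+1)\big)$ after applying $G(z+1)=\Gamma(z)G(z)$ twice, once to $G(n+2)$ and once to $G(n+\lambda+2)$, with the $G(\lambda+1)$ factor cancelling in the difference; and the $t^{-3}$ correction follows by differencing $n(n+\lambda)(5n^2+5n\lambda+\lambda^2+1)$, again with the error $O(t^{-6})$ surviving intact.

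The argument is entirely mechanical and presents no genuine analytic obstacle, since every expansion required is already in hand. The only point demanding care is the simplification of the $t$-independent constants $\widehat{C}_1(n)$ and $\widehat{C}_2(n)$, where the Barnes-$G$ recurrence must be invoked correctly; the polynomial differencing for the $t^{-3/2}$ and $t^{-3}$ corrections, while slightly lengthy in the second case, is routine.
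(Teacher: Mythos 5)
Your proposal is correct and follows exactly the paper's own route: the paper likewise obtains the corollary by substituting the long-time expansions (\ref{d1}) and (\ref{d2}) of $\ln D_n(t)$ into the identity $\ln h_n(t)=\ln D_{n+1}(t)-\ln D_n(t)$ from (\ref{hd}). Your term-by-term differencing, including the use of the Barnes $G$-function recurrence $G(z+1)=\Gamma(z)G(z)$ for the constants, checks out in every detail.
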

\begin{proof}
The results are obtained by substituting (\ref{d1}) and (\ref{d2}) into the equality (\ref{hd}), respectively.
\end{proof}
\section{Conclusion}
In this paper, we have derived the ladder operator equations and compatibility conditions for orthogonal polynomials with respect to a general Laguerre-type weight. It is found that the definitions of the auxiliary functions $A_n$ and $B_n$ must be modified in contrast to Chen and Ismail \cite{ChenIsmail2005}. We have applied the ladder operator equations and compatibility conditions to study the generalized Airy polynomials. In addition, we have used Dyson's Coulomb fluid approach to investigate the large $n$ asymptotics for the recurrence coefficients of the polynomials. Our results complement the ones in \cite{Clarkson4}. We have also obtained some new results for the large $n$ asymptotics and long-time asymptotics of the sub-leading coefficient of the monic generalized Airy polynomials, the normalized constant of the polynomials and the associated Hankel determinant. It should be pointed out that all the asymptotic expansions obtained in this paper can be extended to any higher order.

\section*{Acknowledgments}
This work was partially supported by the National Natural Science Foundation of China under grant number 12001212, by the Fundamental Research Funds for the Central Universities under grant number ZQN-902 and by the Scientific Research Funds of Huaqiao University under grant number 17BS402.

\section*{Conflict of Interest}
The authors have no competing interests to declare that are relevant to the content of this article.
\section*{Data Availability Statement}
Data sharing not applicable to this article as no datasets were generated or analysed during the current study.


\begin{thebibliography}{}
%
%

\bibitem{BD}
{P. Bleher} and {A. Dea$\mathrm{\tilde{n}}$o}, {Topological expansion in the cubic random matrix model}, {\em Int. Math. Res. Notices} {\bf 2013} ({2013}) {2699--2755}.
\bibitem{BD1}
{P. Bleher} and {A. Dea$\mathrm{\tilde{n}}$o}, {Painlev\'{e} I double scaling limit in the cubic random matrix model}, {\em Random Matrices: Theor. Appl.} {\bf 5} ({2016}) {1650004}.
\bibitem{BDY}
{P. Bleher}, {A. Dea$\mathrm{\tilde{n}}$o} and {M. Yattselev}, {Topological expansion in the complex cubic log-gas model: one-cut case}, {\em J. Stat. Phys.} {\bf 166} ({2017}) {784--827}.
\bibitem{Boelen}
{L. Boelen} and {W. Van Assche}, {Discrete Painlev\'{e} equations for recurrence coefficients of semiclassical Laguerre polynomials}, {\em Proc. Amer. Math. Soc.} {\bf 138} ({2010}) {1317--1331}.
\bibitem{CG2021}
{C. Charlier} and {R. Gharakhloo}, {Asymptotics of Hankel determinants with a Laguerre-type or Jacobi-type potential and Fisher-Hartwig singularities}, {\em Adv. Math}. {\bf 383} ({2021}) {107672}.
\bibitem{ChenIsmail}
{Y. Chen} and {M. E. H. Ismail}, {Thermodynamic relations of the Hermitian matrix ensembles}, {\em J. Phys. A: Math. Gen.} {\bf 30} ({1997}) {6633--6654}.
\bibitem{ChenIsmail2}
{Y. Chen} and {M. E. H. Ismail}, {Ladder operators and differential equations for orthogonal polynomials}, {\em J. Phys. A: Math. Gen.} {\bf 30} ({1997}) {7817--7829}.
\bibitem{ChenIsmail2005}
{Y. Chen} and {M. E. H. Ismail}, {Jacobi polynomials from compatibility conditions}, {\em Proc. Amer. Math. Soc.} {\bf 133} ({2005}) {465--472}.
\bibitem{ChenIsmail1998}
{Y. Chen}, {M. E. H. Ismail} and {W. Van Assche}, {Tau-function constructions of the recurrence coefficients of orthogonal polynomials}, {\em Adv. Appl. Math.} {\bf 20} ({1998}) {141--168}.
\bibitem{Chihara}
{T. S. Chihara}, {\em An Introduction to Orthogonal Polynomials}, {Dover}, {New York}, {1978}.
\bibitem{Clarkson}
{P. A. Clarkson} and {K. Jordaan}, {The relationship between semiclassical Laguerre polynomials and the fourth Painlev\'{e} equation}, {\em Constr. Approx.} {\bf 39} ({2014}) {223--254}.
\bibitem{Clarkson4}
{P. A. Clarkson} and {K. Jordaan}, {Generalised Airy polynomials}, {\em J. Phys. A: Math. Theor.} {\bf 54} ({2021}) {185202}.
\bibitem{DHK}
{A. Dea$\mathrm{\tilde{n}}$o}, {D. Huybrechs} and {A. B. J. Kuijlaars}, {Asymptotic zero distribution of complex orthogonal polynomials associated with Gaussian quadrature}, {\em J. Approx. Theory} {\bf 162} ({2010}) {2202--2224}.
\bibitem{Deift}
{P. Deift}, {\em Orthogonal Polynomials and Random Matrices: A Riemann-Hilbert Approach}, {Courant Lecture Notes 3}, {New York University}, {New York}, {1999}.
\bibitem{Dyson}
{F. J. Dyson}, {Statistical theory of the energy levels of complex systems, I, II, III}, {\em J. Math. Phys.} {\bf 3} ({1962}) {140--156, 157--165, 166--175}.
\bibitem{Filipuk}
{G. Filipuk}, {W. Van Assche} and {L. Zhang}, {The recurrence coefficients of semi-classical Laguerre polynomials and the fourth Painlev\'{e} equation}, {\em J. Phys. A: Math. Theor.} {\bf 45} ({2012}) {205201}.
\bibitem{Forrester2010}
{P. J. Forrester}, {\em Log-Gases and Random Matrices}, {Princeton University Press}, {Princeton}, {2010}.
\bibitem{Ismail}
{M. E. H. Ismail}, {\em Classical and Quantum Orthogonal Polynomials in One Variable}, {Encyclopedia of Mathematics and its Applications 98}, {Cambridge University Press}, {Cambridge}, {2005}.
\bibitem{Magnus}
{A. P. Magnus}, {Painlev\'{e}-type differential equations for the recurrence coefficients of semi-classical orthogonal polynomials}, {\em J. Comput. Appl. Math.} {\bf 57} ({1995}) {215--237}.
\bibitem{MS}
{A. Mart\'{\i}nez-Finkelshtein} and {G. L. F. Silva}, {Critical measures for vector energy: asymptotics of non-diagonal multiple orthogonal polynomials for a cubic weight}, {\em Adv. Math.} {\bf 349} ({2019}) {246--315}.
\bibitem{Mehta}
{M. L. Mehta}, {\em Random Matrices}, {3rd edn.}, {Elsevier}, {New York}, {2004}.

\bibitem{Min2021}
{C. Min} and {Y. Chen}, {Differential, difference, and asymptotic relations for Pollaczek-Jacobi type orthogonal polynomials and their Hankel determinants}, {\em Stud. Appl. Math.} {\bf 147} ({2021}) {390--416}.
\bibitem{Min2023}
{C. Min} and {Y. Chen}, {Painlev\'{e} IV, Chazy II, and asymptotics for recurrence coefficients of semi-classical Laguerre polynomials and their Hankel determinants}, {\em Math. Meth. Appl. Sci.} {\bf 46} ({2023}) {15270--15284}.
\bibitem{ST1997}
{E. B. Saff} and {V. Totik}, {\em Logarithmic Potentials with External Fields}, {Springer}, {Berlin}, {1997}.
\bibitem{Szego}
{G. Szeg\H{o}}, {\em Orthogonal Polynomials}, {4th edn.}, {Amer. Math. Soc.}, {Providence, RI}, {1975}.
\bibitem{VanAssche}
{W. Van Assche}, {\em Orthogonal Polynomials and Painlev\'{e} Equations}, {Australian Mathematical Society Lecture Series 27}, {Cambridge University Press}, {Cambridge}, {2018}.
\bibitem{VFZ}
{W. Van Assche}, {G. Filipuk} and {L. Zhang}, {Multiple orthogonal polynomials associated with an exponential cubic weight}, {\em J. Approx. Theory} {\bf 190} ({2015}) {1--25}.
\end{thebibliography}
\end{document}